\documentclass{scrartcl}
\pdfoutput=1

\usepackage[utf8]{inputenc}
\usepackage{graphicx}

\usepackage{url}\urlstyle{same}
\usepackage{amsmath}
\usepackage{amsthm}
\usepackage[algoruled,linesnumbered,noend,algo2e]{algorithm2e}
\usepackage{enumerate}
\usepackage{xspace}
\usepackage{booktabs}
\usepackage{multirow}
\usepackage{array}
\usepackage{xcolor}
\usepackage{textcomp}
\usepackage{microtype}

\newcommand{\extlcp}{\texttt{extLCP}\xspace}
\newcommand{\beetl}{\texttt{BEETL}\xspace}
\newcommand{\beetlem}{\texttt{BEETL2}\xspace}
\newcommand{\bwtlcpem}{\texttt{ble}\xspace}
\newcommand{\gsais}{\texttt{gsa-is}\xspace}
\newcommand{\egsa}{\texttt{egsa}\xspace}
\newcommand{\egap}{\texttt{egap}\xspace}

\newtheorem{proposition}{Proposition}
\newtheorem{observation}{Observation}
\newtheorem{definition}{Definition}
\newtheorem{theorem}{Theorem}
\newtheorem{lemma}{Lemma}

\usepackage[hyperfootnotes=false]{hyperref}
\hypersetup{%
  colorlinks = true,
  linkcolor = red!60!black,
  urlcolor = red!50!black,
  citecolor = blue!60!black,
  pdfauthor = {Bonizzoni et al.},
  pdfkeywords = {},
  pdftitle = {Computing the multi-string BWT and LCP array in external memory},
  pdfsubject = {},
  pdfpagemode = UseNone
}

\bibliographystyle{elsarticle-num}

\begin{document}

\title{Computing the multi-string BWT and LCP array in external memory}

\author{
  Paola Bonizzoni$^*$ \and
  Gianluca Della Vedova \and
  Yuri Pirola \and
  Marco Previtali \and
  Raffaella Rizzi
}
\date{}
\publishers{\smaller
  DISCo, Università degli Studi di Milano--Bicocca,
  Milan, Italy\\[1ex]
  \smaller
  $^*$Corresponding author \href{mailto:paola.bonizzoni@unimib.it}{paola.bonizzoni@unimib.it}
}

\maketitle

\begin{abstract}
Indexing very large collections of strings, such as those produced by the widespread next
generation sequencing technologies, heavily relies on
multi-string generalization of the Burrows-Wheeler Transform (BWT):
large requirements of in-memory approaches have stimulated
recent developments on external memory algorithms.
The related problem of computing the  Longest
Common Prefix (LCP) array of a set of strings is instrumental to compute the suffix-prefix overlaps among strings, which is an
essential step for many genome assembly algorithms.
In a previous paper, we presented an in-memory divide-and-conquer method for
building the BWT  and LCP where we merge partial BWTs with a forward approach to sort suffixes.

In this paper, we propose  an alternative backward strategy  to  develop an external memory  method  to simultaneously build the BWT and the LCP array on a collection of $m$ strings of different lengths. The algorithm over a set of strings having constant length $k$ has \(\mathcal{O}(mkl)\) time and I/O volume, using \(\mathcal{O}(k + m)\) main memory,  where $l$ is the maximum value in the LCP array.
\end{abstract}


\section{Introduction}
In this paper we address the problem of constructing, simultaneously and in external memory, the Burrows-Wheeler Transform (BWT)
and the Longest Common Prefix (LCP) array for a large collection of strings.
The widespread use of Next-Generation Sequencing (NGS) technologies, that are
producing everyday several terabytes of data that  has to be analyzed, requires efficient
strategies to index very large collections of strings.
For example, common applications in metagenomics require  indexing of  collections of strings (reads)
that are sampled from several genomes: those strings can easily contain more than $10^{8}$ characters.
In fact, to start a catalogue of the human gut microbiome, more than 500GB of data have
been used~\cite{qin2010human}.

The Burrows-Wheeler Transform (BWT)~\cite{Burrows1994} is a reversible transformation of a
text that was originally designed for text compression; it is used for example in the
\texttt{bzip2} program.
The BWT of a text $T$ is a permutation of its symbols and is strictly related to the
Suffix Array of $T$.
In fact, the i-${th}$ symbol of the BWT is the symbol preceding the i-${th}$ smallest suffix of $T$ according to the lexicographical sorting of the suffixes of $T$.
The BWT has gained importance beyond its initial purpose, and has
become the basis for self-indexing structures such as the FM-index~\cite{Ferragina2005}, which
allows to efficiently perform important tasks such as searching a pattern in a text~\cite{Ferragina2005,Li15112014,Rosone2013}.
The generalization of the BWT (and the FM-index) to a collection of strings was
introduced in~\cite{MantaciCPM2005,MantaciTCS2007}.

An entire generation of recent Bioinformatics tools heavily rely on the notion of BWT.
Representing the reference genome with its FM-index is the basis of the most
widely used aligners, such as Bowtie~\cite{Langmead2009},
BWA~\cite{LiBWABioinformatics2009,LiBWABioinformatics2010} and
SOAP2~\cite{LiSOAP2Bioinformatics2009}.
Still, to attack some other fundamental Bioinformatics problems, such as genome assembly,
an all-against-all comparison among the input strings is needed, especially to
find all prefix-suffix matches (or overlaps) between reads in the context of the Overlap-Layout-Consensus (OLC) approach based on string graph~\cite{Myers2005}. This fact justifies
the search for extremely efficient algorithms to compute the BWT on a collection of strings~\cite{li_exploring_2012,Valimaki2010,Ferragina2012,bauer_lightweight_2013}.
For example, SGA (String Graph Assembler)~\cite{Simpson2012} is a de novo genome assembler that builds a string graph from the FM-index of the collection of input reads.
In a preliminary version of SGA~\cite{Simpson2010}, the authors  estimated, for human sequencing data at a 20x coverage, the need of
700Gbytes of RAM in order to build the suffix array, using the construction algorithm
in~\cite{nong2009linear}, and the
FM-index.
Another technical device that is used to tackle the genome assembly in the OLC approach is the Longest Common Prefix (LCP) array of a collection of strings, which is instrumental to compute the prefix-suffix matches in the collection.

The construction of the BWT and LCP array of a huge collection of strings is a
challenging task.
A simple approach is constructing the BWT from the Suffix Array, but it is prohibitive for massive datasets mostly due to the main memory requirements.
A first attempt to solve this problem~\cite{Siren2009}
partitions the input collection into batches, computes the BWT for each batch and then merges the results.

The huge amount of available biological data has stimulated the development  of
efficient external memory algorithms (called, BCR and BCRext) to construct the BWT of a collection of strings~\cite{Bauer2011}.
Similarly, a lightweight approach to the construction of the LCP array (called \extlcp) was investigated in~\cite{Cox2016}. %
With the ultimate goal of obtaining  an external memory genome assembler,
LSG~\cite{DBLP:journals/jcb/BonizzoniVPPR16,Bonizzoni2017} is based on BCRext and
contains an external memory approach to compute the string graph of a set of reads.
In that approach, external memory algorithms to compute the BWT and the LCP
array~\cite{bauer_lightweight_2013,DBLP:conf/wabi/BauerCRS12} are fundamental.

In this context, we are considering a model of computation where memory is split into two
parts: a finite random access memory, and an unlimited sequential access disk.
Essentially, this model is an extension of the standard RAM model where we also have a
sequential access disk.

In this paper we present a new lightweight (external memory) approach to compute the BWT
and the LCP array of a collection of strings of different lengths, which is alternative to
\extlcp~\cite{Cox2016} and other approaches~\cite{Bonizzoni2016ANL,Holt2014,DBLP:conf/spire/EgidiM17,DBLP:journals/corr/Manzini16,Louza2017d}.
The literature is rich of in-memory methods~\cite{karkkainen_faster_2016,kasai2001linear}, as well as
some external-memory algorithm on a single text~\cite{karkkainen_faster_2016}.
From a theoretical point of view, we can transform a set of strings into an instance
consisting of a single text by concatenating the input strings after adding a distinct
sentinel for each string.
Anyway this would increase the alphabet size from \(\sigma\) to
\(m + \sigma\), and this effect must be taken into account.

The algorithm BCRext is proposed together with BCR and both are designed to work on huge
collections of strings (the experimental  analysis is on hundreds of millions of 100-long strings).
Especially \extlcp is lightweight because, on a collection of $m$ strings of length $k$,
it uses only  \(\mathcal{O}(m + \sigma^2)\) RAM space and essentially
\(\mathcal{O}(mk^{2})\) CPU time, with matching I/O volume,
under the usual assumption that the word size is sufficiently large to store all
addresses.

An important question is how to achieve the optimal  $\mathcal{O}(km)$ I/O volume.
BCRext~\cite{Bauer2011} incrementally computes the BWT of the collection $S$ via $k+1$ iterations.  At each iteration $l$ ($0\le l\le k$) the algorithm computes a partial BWT
$bwt_{l}(S)$ that is the BWT for the ordered collection of suffixes with length at
most  $l$.
This approach requires that, at each iteration $l$,  the symbols preceding the
suffixes of $S$ with length $l-1$ must be inserted at their correct positions into $bwt_{l-1}(S)$, that is each  $l$  iteration simulates the insertion of the suffixes with length $l$ in the ordered collection of the suffixes with length at most $l-1$.  Updating  the partial BWT $bwt_{l}(S)$ in external memory, the process  requires a sequential scan of the file containing the  information of the partial $bwt_{l -1}(S)$. Thus the I/O volume at each  iteration $l$  is at least $m (l-1) \lg \sigma$ (since there are  $m$ suffixes for each length $l$ between $1$ to $l -1$).
Consequently the total I/O volume for computing $bwt_{k}(S)$ is  at least $O(m k^2)$.
More precisely, the BCRext algorithm in \cite{Bauer2011} that uses less RAM,  requires at each $l$ iteration an additional  I/O volume  given by $m  \lg (km)$, due to a process of ordering special arrays used to save RAM space.
Our algorithm for building the BWT and the LCP, differently from~\cite{Bauer2011},
consists of two distinct phases: the first phase  that has $O(m k)$ I/O volume  and time complexity produces $k+1$ arrays $B_0, \ldots, B_k$, each array $B_l$ lists the symbols preceding the suffixes with length exactly $l$ according to the lexicographical ordering of such suffixes.
The second phase computes the interleave of vectors $B_0, \cdots, B_k$ that is equal to the BWT $B$ of $S$. Indeed, the BWT $B$ is an interleave of the arrays
$B_0, \ldots, B_k$, since the ordering of symbols in $B_{l}$ is preserved in the final
BWT $B$, i.e., $B$ is \emph{stable} w.r.t.~each array $B_0, \ldots, B_k$.
Inspired by~\cite{Holt2014}, we perform this step by a number of $L$ iterations, where $L$
the length of the longest substring that has at least two occurrences in $S$.
Thus the merging operation takes fewer iterations than BCRext (the latter requires $k$ iterations).
Observe that at each iteration of the merging procedure of the arrays $B_0, \ldots B_{k}$, a partial LCP array is computed  to get the  final LCP array at the last iteration.

Our algorithm has \(\mathcal{O}(mkl\sigma)\) time complexity, uses \(\mathcal{O}(mkl\max\{\lg m, \lg l\})\)
I/O volume, and \(\mathcal{O}(\sigma w + k w + m \lg \sigma + \lg l)\) main memory, where $l$ is the maximum value in the
LCP array and \(w\) is the space required to store a memory address.
Moreover, our approach is entirely based on linear scans (i.e., it does not contain a
sorting step) which makes it more amenable to actual disk-based implementations.
We point out that $l\le k$, therefore our time and I/O complexities are at least as good
as those of \extlcp~\cite{Cox2016} when building the data structures for massive sets of short sequences over a constant alphabet and if \(\lg m\) and \(\lg l\) are smaller than the word size (which is usually the case).
The RAM usage of our approach and that of \extlcp are not easily comparable, since they are respectively $\mathcal{O}(\sigma w + k w + m \lg \sigma + \lg l)$ and
$\mathcal{O}(m+\sigma^{2})$.
If we suppose that we can store a memory address in a memory word, our RAM usage is \(\mathcal{O}(\sigma + k + m\lg \sigma + \lg l)\).
This means that, in theory, when building the BWT and the LCP of few large strings \extlcp will use less RAM than the method presented in this paper.
We point out that our algorithm works also on a set of reads having different lengths, and the following sections describe the algorithm referring to that case.

While writing our paper, two similar approaches have
appeared in the literature~\cite{DBLP:conf/spire/EgidiM17,Louza2017d}.
The method proposed in~\cite{DBLP:conf/spire/EgidiM17} starts from the BWT merging phase of~\cite{Holt2014}
to also build the LCP array using a small amount of memory.
We point out that our paper and~\cite{DBLP:conf/spire/EgidiM17} are two
independent works.
Moreover, our focus is on an external-memory approach, which is not explicitly
pursued in~\cite{DBLP:conf/spire/EgidiM17}.
An extension to fully external memory computation of BWT and LCP of~\cite{DBLP:conf/spire/EgidiM17} is \egap~\cite{DBLP:journals/almob/EgidiLMT19}.
This method computes the data structures in three separate steps, (1) splitting the input sequences in subcollections such that the BWT can be computed in-memory, and (2-3) then merging them building the LCP along the way.
The I/O and time complexities of \egap are both \(\mathcal{O}(mkl)\), matching the ones of our algorithm when the alphabet is constant.

The method proposed in~\cite{Louza2017d} (\egsa) is a two-phase algorithm for the construction of the Generalized Enhanced Suffix Array, the LCP, and, optionally, the BWT.
In the first step the required data structures are build for each sequence in input, whereas in the second step the output is produced by merging the data structures built previously.
Although \egap can be seen as an evolution of \egsa, we included both tools in our experimental evaluation to highlight that building the Generalized Enhanced Suffix Array requires way more resources than directly building the BWT and the LCP.

The paper is laid out as follows.
In Section~\ref{sec:definitions} we provide the basic definitions we will use in the following.
In Section~\ref{sec:sketch-algorithm} we give a high level description of the method proposed in this paper and illustrate the backward and forward strategies to merge partial arrays.
In Section~\ref{sec:partition-suffixes} and Section~\ref{sec:merge-suffixes} we dive into the details of our algorithm.
In Section~\ref{sec:complexity} we analyze the time and I/O complexities of our method.
In Section~\ref{sec:exp-eval} we provide an experimental analysis of our tool and a comparison with other tools available in the literature.
Finally, in Section~\ref{sec:conclusions} we recap the contributions of this paper.

\section{Preliminaries}
\label{sec:definitions}

Let $\Sigma = \{c_0, c_1, \ldots, c_{\sigma}\}$ be a finite alphabet where $c_0 = \$$
(called \emph{sentinel}), and $c_0 < c_1 < \cdots < c_{\sigma}$ where $<$ specifies the lexicographic ordering over alphabet $\Sigma$.
We consider a collection $S=\{s_1, s_2, \ldots, s_m\}$ of $m$ strings (reads), where each string $s_j$ consists of $k_j$ %
symbols  over the alphabet $\Sigma \setminus \{\$ \}$ and is terminated by the symbol \$, such that $k_j+1$ is the total length of $s_j$. The set $S$ is intended as a sequence of strings, where $s_j$ is the j-$th$ string in the set.
The $i$-{th} symbol of string $s_j$ is denoted by $s_j[i]$,  and the substring
$s_j[i]s_j[i+1] \cdots  s_j[t]$ of $s_j$ is denoted by $s_j[i:t]$. We will refer to $s_j[k_j]$ as the \emph{last character} of the string $s_j$ and is the character immediately before the sentinel.
The \emph{suffix} and \emph{prefix} of $s_j$ with length $l$ are the
substrings %
$s_j[k_j-l+1: k_j+1]$ (denoted by $s_j[k_j-l +1:]$)
and $s_j[1: l]$ (denoted by
$s_j[:l]$) respectively. Observe that $l$ counts, for the suffix, only the characters which are in $\Sigma \setminus \{\$ \}$ (excluding \$).
Then, the suffix and prefix with length $l$ of a string $s_j$ will be called the \emph{$l$-suffix} and \emph{$l$-prefix} of $s_j$, respectively. In particular, the $0$-suffix is the suffix uniquely composed of the sentinel \$.
In the following we will denote with $K$ the total length of the input reads (including the sentinel \$).

Given the lexicographic ordering $X$ of the suffixes of $S$, %
the \emph{Suffix Array} is the $K$-long %
array $SA$ where the element $SA[i]$ is equal to $(p, j)$ if and only if the $i$-{th} element of $X$ is the  $p$-suffix of string $s_{j}$. %
We make the assumption that a suffix $s\$$ from string $s_i$ is lexicographically smaller than the identical suffix $s\$$ from a different string $s_j$ if $i < j$. In other words, the two identical suffixes are ordered accordingly to the order of their origin strings in $S$. This assumption guarantees the uniqueness of the Suffix Array for the collection $S$.

The definition of suffix array we provide is slightly different than the one conventionally used, where \(SA[i] = (p, j)\) refers to the suffix of the string \(s_j\) starting at position \(p\).
We have decided to abide by this definition to ease the presentation of the method in the following sections.

The \emph{Burrows-Wheeler Transform (BWT)} of $S$ is the $K$-long %
array $B$ where if $SA[i] = (p,j)$, then $B[i]$ is the first symbol of the $(p+1)$-suffix of $s_j$ if $p < k_j$, otherwise $B[i]= \$$.
In other words $B$ consists of the symbols preceding the ordered suffixes of $X$, where the
preceding symbol is the sentinel \$ when the suffix is the complete string $s_j$ (i.e., the $k_j$-suffix).

The \emph{Longest Common Prefix (LCP) array} of  $S$ is the $K$-long %
array $LCP$  such that $LCP[i]$  is the length of the longest prefix shared by suffixes $X[i-1]$ and $X[i]$.  Conventionally, $LCP[1]=-1$.

Now, we can give the definition of \emph{Interleave} of a generic set of arrays, that will be used extensively in the following.

\begin{definition}
	\label{def:interleave}
	Given $n+1$ arrays $V_0, V_1, \ldots, V_n$, then an array $W$ is an \emph{interleave} of $V_0, V_1, \ldots, V_n$ if
	$W$ is the result of merging the arrays such that: (i) there is a 1-to-1 function $\psi_W$
	from the set $\cup_{i=0}^n \{ (i,j): 1\le j\le |V_i|\}$ to the set $\{ q : 1\le q \le |W|
	= \sum_{i} |V_{i}|\}$, (ii)
	$V_i[j] = W[\psi_W(i,j)]$ for each $i,j$, and (iii) $\psi_W(i, j_1) < \psi_W(i, j_2)$ for each $j_1 < j_2$.
\end{definition}

The interleave $W$ is an array giving a fusion of $V_0, V_1 \ldots, V_n$ which preserves
the relative order of the elements in each one of the arrays. As a consequence, for each
$i$ with $0 \leq i \leq n$, the  j-${th}$ element of $V_i$ corresponds to the j-${th}$
occurrence in $W$ of an element of $V_i$. This fact allows to encode the function $\psi_W$ as an array $I_W$ such that $I_W[q] = i$ if and only if $W[q]$ is an element of
$V_i$. By observing that $W[q]$ is equal to $V_{I_W[q]}[r]$ where $r$ is the number of integers equal to $I_{W}[q]$ in $I_W[:q]$,  %
it is easy to show how to reconstruct $W$ from $I_W$
(see Algorithm~\ref{alg:interleave-encoding} where the array $pos$ keeps, for each index $i$ from $0$ to $n$, such number $r$ while scanning array $I_W$). %

In the following, we will refer to array $I_W$ as \emph{interleave-encoding} (or simply \emph{encoding}). Figure~\ref{fig:interleave} shows an example of an interleave of four arrays and its encoding.

\begin{figure}[t]
	\centering\renewcommand{\baselinestretch}{1.15}\normalsize
	\begin{minipage}{0.3\textwidth}%
		\begin{tabular}{|c|c|c|c|}
			\hline
			$V_{0}$&$V_{1}$&$V_{2}$&$V_{3}$\\\hline
			T&C&A&A\\
			T&G&C&A\\
			A&&C&T\\
			G&&&T\\
			C&&&\\
			\hline
		\end{tabular}
	\end{minipage}%
	\quad%
	\begin{minipage}{0.3\textwidth}%
		\begin{tabular}{|c|c|}
			\hline
			$W$&$I_{W}$\\\hline
			T&0\\
			A&2\\
			A&3\\
			A&3\\
			C&1\\
			C&2\\
			C&2\\
			G&1\\
			T&0\\
			A&0\\
			T&3\\
			G&0\\
			T&3\\
			C&0\\
			\hline
		\end{tabular}
	\end{minipage}
	\caption{Example of an interleave $W$ of four arrays $V_0, V_1, V_2, V_3$.}
	\label{fig:interleave}
\end{figure}

\begin{algorithm2e}[htb!]
	\For{$i\gets 0$ to $n$}{%
		$pos[i] \gets 0$\;
	}

	\For{$q \gets 1$ to $|I_W|$}{%
		$i \gets I_W[q]$\;
		$pos[i] \gets pos[i] +1$\;
		$W[q] \gets V_i[pos[i]]$\;\label{alg:interleave-encoding:element}
	}

	\caption{Reconstruct the interleave $W$ from the encoding $I_W$}
	\label{alg:interleave-encoding}
\end{algorithm2e}

\section{The algorithm}
\label{sec:sketch-algorithm}

In this section we will provide a sketch of our algorithm. %
Let $k$ be the maximum length of a string in $S$ (excluding \$) and let $X_l$ and $B_l$ ($0 \le l \le k$) be arrays of length at most $m$ such that $X_l[i]$ is the
i-${th}$ smallest $l$-suffix among all the $l$-suffixes of the strings of $S$ and $B_l[i]$ is the symbol preceding %
$X_l[i]$.
In particular, $X_0$ and $B_0$ list respectively the $0$-suffixes and the last characters of the input strings in their order in the set $S$.
Observe that $B_l$ is a subsequence of the BWT $B$ of $S$,
and it is easy to see that $B$ is an interleave of the $k+1$ arrays $B_0, B_1, \ldots, B_k$, since the ordering of symbols in $B_{l}$ ($0 \leq l \leq k$) is preserved in $B$. %

Similarly, the lexicographic ordering $X$ of all suffixes of $S$ is an interleave of the arrays $X_0, X_1, \ldots, X_k$.
Let $I_B$ be the encoding of the interleave of arrays $B_0, B_1, \ldots, B_k$ giving the BWT $B$, and let $I_X$ be the encoding of the interleave of arrays $X_0, X_1, \ldots, X_k$ giving $X$. Then it is possible to show that $I_B = I_X$.
Now, given $I_B$ it is immediate to reconstruct $B$ by using Algorithm~\ref{alg:interleave-encoding}.

In the following, we will call $B_0, B_1, \ldots, B_k$ and  $X_0, X_1, \ldots, X_k$ as \emph{partial BWTs} and \emph{partial Suffix Arrays}, respectively. Figure~\ref{fig:example-vectors-B} shows an example of partial BWTs and partial Suffix Arrays for a set of $m=3$ reads on alphabet $\{A,C,G,T\}$, %
whose interleaves $B$ and $X$ (BWT and sorted suffixes, respectively) and the encoding $I_B=I_X$ are reported in the first, second and third columns of Figure~\ref{fig:example-interleave}.

\begin{figure}[t]
	\centering\renewcommand{\baselinestretch}{1.15}\normalsize%
	\begin{minipage}{0.79\textwidth}%
		$s_{1}$: \verb+TCGT$+\hspace{2em}
		$s_{2}$: \verb+CT$+\hspace{2em}
		$s_{3}$: \verb+ACA$+\\
	\end{minipage}\\[.5em]
	\begin{minipage}{0.2\textwidth}%
		\begin{tabular}{|cc|}
			\hline
			$B_{0}$&$X_{0}$\\\hline
			T&\$\\
			T&\$\\
			A&\$\\
			\hline
		\end{tabular}
	\end{minipage}%
	\begin{minipage}{0.2\textwidth}%
		\begin{tabular}{|cc|}
			\hline
			$B_{1}$&$X_{1}$\\\hline
			C&A\$\\
			G&T\$\\
			C&T\$\\
			\hline
		\end{tabular}
	\end{minipage}%
	\begin{minipage}{0.2\textwidth}%
		\begin{tabular}{|cc|}
			\hline
			$B_{2}$&$X_{2}$\\\hline
			A&CA\$\\
			\$&CT\$\\
			C&GT\$\\
			\hline
		\end{tabular}
	\end{minipage}%
	\begin{minipage}{0.2\textwidth}%
		\begin{tabular}{|cc|}
			\hline
			$B_{3}$&$X_{3}$\\\hline
			\$&ACA\$\\
			T&CGT\$\\
			\hline
		\end{tabular}
	\end{minipage}%
	\begin{minipage}{0.2\textwidth}%
		\begin{tabular}{|cc|}
			\hline
			$B_{4}$&$X_{4}$\\\hline
			\$&TCGT\$\\
			\hline
		\end{tabular}
	\end{minipage}%

	\caption{An example of $m=3$ reads $s_1, s_2, s_3$ with maximum length $k=4$, together with the partial BWTs $B_{0}, B_{1}, B_{2}, B_{3}, B_{4}$ and
		the partial Suffix Arrays $X_{0}, X_{1}, X_{2}, X_{3}, X_{4}$.}
	\label{fig:example-vectors-B}
\end{figure}

\begin{figure}[t]
	\centering\renewcommand{\baselinestretch}{1.15}\normalsize%
	\begin{tabular}{|clcc|cc|}
		\hline
		$B$&$X$&$I_{B}=I_X$&$\mathit{LCP}$&&\\\hline
		T&\$&0&-1&$B_0[1]$&$X_0[1]$\\
		T&\$&0&0&$B_0[2]$&$X_0[2]$\\
		A&\$&0&0&$B_0[3]$&$X_0[3]$\\
		C&A\$&1&0&$B_1[1]$&$X_1[1]$\\
		\$&ACA\$&3&1&$B_3[1]$&$X_3[1]$\\
		A&CA\$&2&0&$B_2[1]$&$X_2[1]$\\
		T&CGT\$&3&1&$B_3[2]$&$X_3[2]$\\
		\$&CT\$&2&1&$B_2[2]$&$X_2[2]$\\
		C&GT\$&2&0&$B_2[3]$&$X_2[3]$\\
		G&T\$&1&0&$B_1[2]$&$X_1[2]$\\
		C&T\$&1&1&$B_1[3]$&$X_1[3]$\\
		\$&TCGT\$&4&1&$B_4[1]$&$X_4[1]$\\
		\hline
	\end{tabular}
	\caption{BWT $B$, sorted suffixes $X$, encoding $I_B=I_X$ and LCP array for the set of reads presented in Figure~\ref{fig:example-vectors-B}. The last two columns report, for each element of $B$ and $X$, its origin in arrays $B_l$ and $X_l$ (respectively).}
	\label{fig:example-interleave}
\end{figure}

Our algorithm for building the BWT $B$ and the LCP array %
consists of two distinct phases: in the first phase it computes each partial BWT $B_l$ ($0 \leq l \leq k$) by implicitly sorting the $l$-suffixes of $S$
(see Section~\ref{sec:partition-suffixes}), while in the second phase %
it determines %
$I_X=I_B$ (see Section~\ref{sec:merge-suffixes}) by a merging algorithm inspired by~\cite{Holt2014} (for merging two BWTs), thus allowing to reconstruct $B$ as an interleave of $B_0, \ldots, B_k$.
We slightly modified the approach in~\cite{Holt2014} in order to merge the arrays %
$B_0, \ldots, B_k$ into the BWT $B$ by implicitly merging the array $X_0, \ldots, X_k$ into the array $X$ (giving the lexicographic ordering of all suffixes of $S$).
The second phase computes, together with the BWT $B$, also the LCP array of the input set $S$.

We note that the definition of partial BWTs and the method sketched here hint to some relation between the partial BWTs and the positional BWT (pBWT) presented in~\cite{durbin2014efficient}, although the latter is presented for an alphabet of size \(2\).
Indeed, given a set of sequences, both reorder the characters at a given distance from one end of each sequence in input.
More precisely, each partial BWT is an ordering of all the elements of the sequences at a given distance from the end of them, whereas each column of the pBWT is an ordering of all the elements at a given distance from the start of the sequences.
In light of this fact, we can describe the two steps sketched in this section as follows: (i) build the pBWT of the input sequences reversed, and (ii) build the BWT and the LCP array by merging the columns of the pBWT.
Although we will not describe our method in terms of pBWT in the following
sections, we think that the connection we just highlighted further confirms
strong relations between multiple BWT-like data structures presented thorough
the years to index different structures (\textit{e.g.},
trees~\cite{ferragina2009compressing}, de Bruijn
graphs~\cite{bowe2012succinct,boucher2015variable,belazzougui2016bidirectional},
and circular patterns~\cite{hon2011succinct}), as recently shown
in~\cite{gagie2017wheeler}.

Both phases of our method apply a Radix Sort strategy to reorder the suffixes (i.e., the $l$-suffixes of $S$ in order to compute the partial BWT $B_l$ in the first phase, and the overall set of suffixes of $S$ in the second phase in order to compute $I_B$). The first phase iteratively computes the partial BWTs $B_0, B_1, \ldots, B_k$. Each iteration $l$ ($0 \leq l \leq k$) computes $B_l$ from the order of the $l$-suffixes (array $X_l$) implicitly computed by the previous iteration $l-1$ (array $X_{l-1}$). We point out that this algorithm adopts a LSD Radix Sort strategy that can be interpreted as ``global'', since suffixes are sorted from the rightmost to the leftmost character (that is, it adopts a LSD strategy), and the order of $X_l$ is implicitly obtained from the order of $X_{l-1}$ without applying the radix sort to each one of the sets of $l$-suffixes.

The second phase applies a MSD Radix Sort strategy since it reorders the suffixes from the leftmost to the rightmost characters, and can be performed in two different ways as described in the following section.

 \subsection{Backward and forward strategies for merging the partial BWTs}
 \label{sec:sketch-second-phase}

The encoding $I_X$ is basically computed by an iterative procedure starting by
the \emph{trivial} sorting given by taking first the suffixes of $X_0$, followed
by the suffixes of $X_1$, followed by the suffixes of $X_2$, etc., followed by
the suffixes of $X_k$ (\emph{trivial} interleave). Note that the encoding of the
\emph{trivial} interleave is given by $k$ runs of the integers from $0$ to $k$:
that is, $|X_0|$ integers equal to $0$, followed by $|X_1|$ integers equal to $1$, etc., followed by $|X_k|$ integers equal to $k$.
Starting from that sorting, the procedure applies a MSD Radix Sort strategy to sort the suffixes of $S$, by the first (leftmost) characters at the first iteration, then by the first two characters at the second iteration, etc., and finally by the first $k$ characters ($k$ is the maximum length of the strings in the input set $S$) at the k-$th$ iteration. More precisely, at the p-$th$ iteration, it computes the encoding of the interleave, giving the sorting by the first $p$ characters, from the interleave giving the sorting by the first $p-1$ characters (computed at the previous iteration). At the k-$th$ iteration the computed encoding is clearly $I_X$.

In the following, the interleave of arrays $X_0, X_1, \ldots, X_k$, giving the sorting of the suffixes by the first $p$ characters will be called \emph{$p$-interleave} and denoted as $X^p$, and its encoding will be denoted as $I_{X^p}$. The encoding $I_X$ is clearly equal to $I_{X^k}$ (and $X$ is equal to $X^k$).
We point out that $X^p$ is the list of all the suffixes in the input collection $S$ sorted by their prefixes of length $p$. In other words, $X^p$ includes also suffixes shorter than $p$. In this ordering, a suffix $s\$$ shorter than $p$ will come before any suffix having string $s$ as a prefix, and moreover such suffix will have the same position in all the orderings $X^q$ such that $q > p$.

Iteration $p$ computes the encoding $I_{X^p}$ from the encoding $I_{X^{p-1}}$ obtained at the iteration $p-1$. The first iteration $p=1$ computes $I_{X^1}$ from $I_{X^0}$ which is the \emph{trivial} encoding composed of $k$ runs of the integers from $0$ to $k$ ($I_{X^0}$ is the encoding of the $0$-interleave giving trivially the suffixes sorted by the first $0$ characters).

Two different strategies can be used for computing $I_{X^p}$ from $I_{X^{p-1}}$, which are based on the two following observations.

\begin{observation}
\label{obs:backward}
If $X^{p-1}[i]$ with length $l=I_{X^{p-1}}[i]$ is the $r$-th suffix preceded by a symbol $c$, then the suffix $c X^{p-1}[i]$ with length $l+1$ will be the $r$-th suffix in $X^p$ starting with $c$. Therefore, $I_{X^p}[j]$ will be equal to $l+1$, such that $j = s + r$, and $s$ is the number of symbols preceding suffixes of $X^{p-1}$ which are smaller than $c$. Observe that, when $c=\$$, then $c X^{p-1}[i]$ is actually the empty suffix having length $0$, and $s$ is equal to $0$.
\end{observation}

\begin{observation}
\label{obs:forward}
Let $[b,e]$ be the interval of positions related to the suffixes of $X^{p-1}$ sharing the first $p-1$ characters, and (among them) let us consider the $r$-th suffix having a given $c$ at position $p$. Then, such suffix will be at position $j = b+s+r,$ of $X^p$ ($j \in [b,e]$), where $s$ is the number of suffixes in the interval $[b,e]$ having a symbol smaller than $c$ at position $p$. Therefore, $I_{X^p}[j]$ will be equal to $l$.
\end{observation}

The first strategy (\emph{backward}) is based on Observation~\ref{obs:backward} and consists in scanning the encoding $I_{X^{p-1}}$. $|\Sigma|$ empty buckets are initialized (one for each alphabet symbol), and for each length $l=I_{X^{p-1}}[i]$, the symbol $c$ preceding the related suffix is obtained from the partial BWTs ($c$ is indeed the $i$-th symbol of the interleave of the partial BWTs encoded by $I_{X^{p-1}}$,  and will be the $t$-th of vector $B_l$ if suffix $X^{p-1}[i]$ is the $t$-th suffix in $X^{p-1}$ having length $l$). At this point, the length $l+1$ is added to the bucket related to symbol $c$, if $c$ is not the sentinel \$, otherwise the value $0$ is added (to the \$ bucket).
At the end of the iterations, the concatenation of the buckets following the lexicographical order of the symbols, provides the encoding $I_{X^{p}}$.

The second strategy (\emph{forward}) is based on Observation~\ref{obs:forward} and maintains a partitioning of the generic encoding $I_{X^p}$ into contiguous segments, which are called $p$-segments. A $p$-segment is an interval of positions which are related to suffixes sharing the first $p$ characters.
The forward strategy consists in scanning the $p-1$-segments of the encoding $I_{X^{p-1}}$ one after the other and uses $|\Sigma|$ initially empty buckets. For each $p-1$-segment $[b,e]$, its $(e-b+1)$ suffixes are considered, and each suffix length $l$ is added to a
bucket depending on the symbol at position $p$ in the suffix.
At the end of the iterations over the $p-1$-segment, the concatenation of the buckets following the lexicographical order of the symbols, provides the encoding $I_{X^{p}}$ between positions $b$ and $e$.

Both algorithms compute the interleave $I_X$ giving the Suffix Array as defined in Section 2 whose uniqueness is guaranteed by the radix sort strategy.

Figure~\ref{fig:merging-strategies} shows an example for the two strategies applied to a set of three reads.

In Section~\ref{sec:merge-suffixes} the backward strategy will be detailed alongside the computation of the LCP array. We refer to~\cite{jcb-bonizzoni-2019,DBLP:conf/cie/BonizzoniVNPPR18} for the details about the forward strategy.

\begin{figure}[h!t]
	\includegraphics[width=0.99\linewidth]{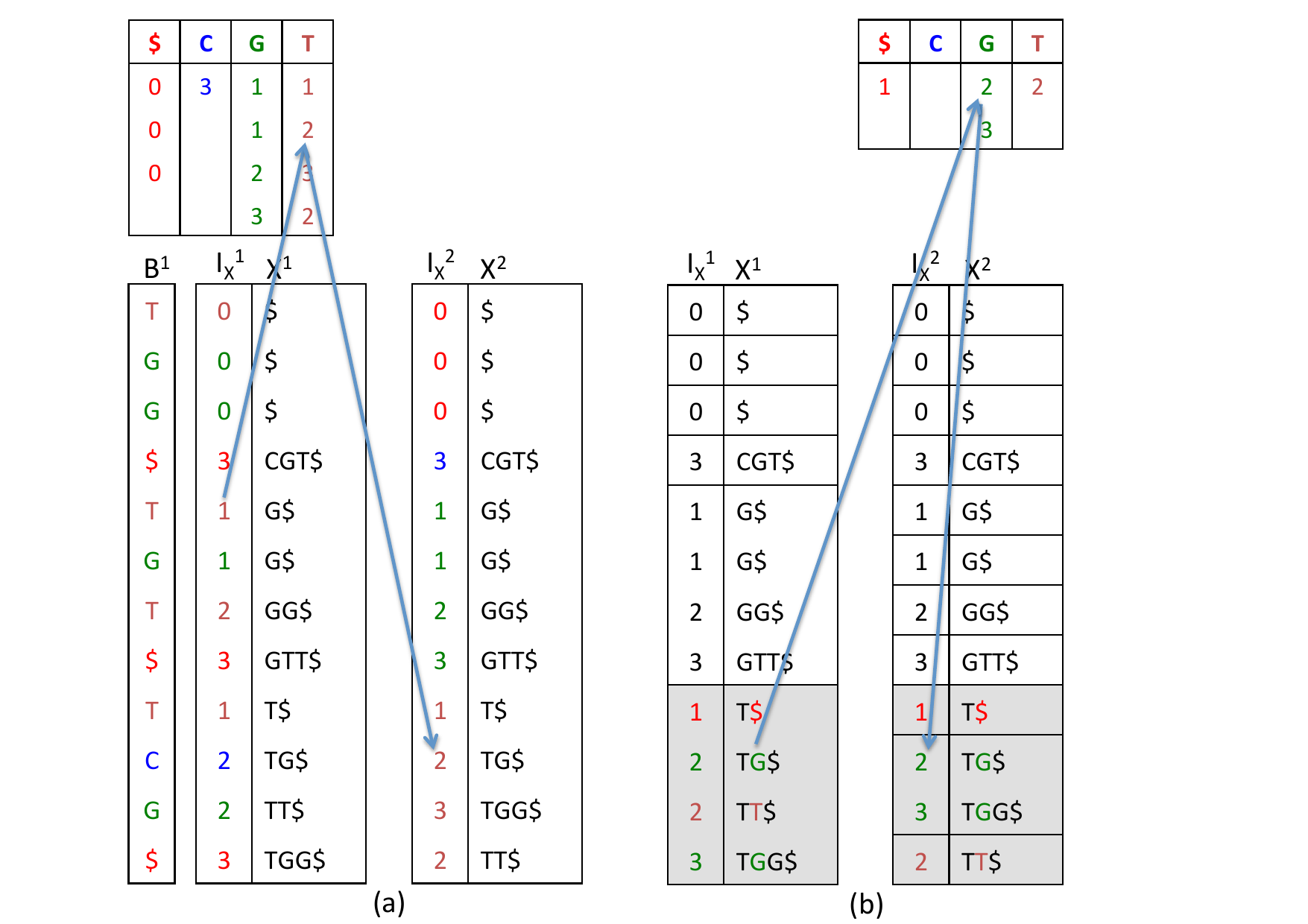}
	\caption{Example of backward and forward strategy. An example of application of backward (a) and forward (b) strategy for a set of three reads $s_1 = GTT$, $s_2 = CTG$ and $s_3 = TGG$. The encoding $I_{X^2}$ of the ordering by the first $2$ characters of the suffixes is obtained from the encoding $I_{X^1}$ of the ordering by the first character. Buckets \$,$C$,$G$,$T$ are depicted at the top of the figure.
	For (a) the interleave (tagged as $B^1$) of the partial BWTs encoded by $I_{X^1}$ is also shown (that is, the interleave of the preceding symbols of the suffixes), whereas for (b) the segments (on $I_{X^1}$ and $I_{X^2}$) are separated by horizontal bars; (b) shows how to obtain the portion of $I_{X^2}$ corresponding to the last $1$-segment of $I_{X^1}$ (this procedure must be obviously repeated for the previous segments), which is highlighted in grey and covers the last four positions of $I_{X^1}$, and refer to the suffixes sharing the first symbol $T$.
	Encoding values and symbols are colored in order to highlight the two strategies.}
	\label{fig:merging-strategies}
\end{figure}

\section{Computing the partial BWTs}
\label{sec:partition-suffixes}
The first phase of the method computes the partial BWTs $B_0, \ldots, B_k$ by first preprocessing the input strings $s_{1}, \ldots , s_{m}$ in order to obtain $k+1$ arrays $T_0,\ldots,T_{k}$ with length $m$,
where $T_l$ lists the characters %
such that $T_l[i]=s_i[|s_i|-l]$ %
when $0 \leq l \leq |s_i|-1$, %
$T_l[i]=\$$ when $l=|s_i|$, and $T_l[i] = \#$ when $l>|s_i|$ (where $\#$ is a padding symbol not belonging to the alphabet of the input strings). %
Section (a) of Figure~\ref{fig:example-iteration-partialbwts} reports an example of arrays $T_l$ for the three strings of Figure~\ref{fig:example-vectors-B}. Observe that $T_0$ lists the last characters $\langle s_1[k_1], s_2[k_2], \ldots, s_m[k_m] \rangle$ of the input strings in the same order the strings have in the set $S$, and $T_0$ is clearly equal to $B_0$. Observe that $T_l[i]$ (when different from $\#$) is the symbol preceding the $l$-suffix of string $s_i$.

The preprocessing step is a trivial task that iterates
over the input strings and outputs the $k+1$ arrays $T_0, \ldots, T_k$.
We can summarize this procedure as a loop that iterates over the input strings and performs the following steps.
Let $s_i$ be the input string and suppose that we already preprocessed the previous $i-1$ sequences.
We first reverse $s_i$ and produce the string $r_i$.
Then, for each position $l$ of $r_i$, we write the $l$-th character of $r_i$ at position $i$ of array $T_l$, padding this array with \# if it includes less than $i-1$ elements.
Finally, we write $\$$ at position $i$ of array $T_{|s_i|}
$ (padding it with \# if required) and move to the next sequence.

The partial BWTs $B_0, \ldots, B_k$ are computed by Algorithm~\ref{alg:sort-suffixes} which receives in input the arrays $T_0, \ldots, T_{k}$, and performs $k+1$ iterations.
Iteration $l$ ($0 \leq l \leq k$) computes $B_l$ from $X_l$ which is implicitly known and implicitly determines $X_{l+1}$ to be used in the next iteration.
More in detail, the ordering of array $X_l$ is known by means of a array $N_{l}$, with length at most $m$, such that $N_{l}[i]=q$ if and only if the i-${th}$ element of $X_{l}$ is the $l$-suffix from the input string $s_q$. In other words, position $i$ of array $N_{l}$ gives the index $q$ of the string whose $l$-suffix is the i-$th$ in $X_{l}$.
The partial BWT $B_l$ can be computed (see cycle at line~\ref{alg:sort-suffixes:compute-projections-start}) by scanning $N_l$, since $B_l[i]$ is (by definition) the symbol preceding the $l$-suffix of the string $s_q$, where the index $q$ is equal to $N_l[i]$, and can be obtained by accessing array $T_l$. Indeed, $B_l[i]=T_l[q]$. Observe that $B_l$ is treated by Algorithm~\ref{alg:sort-suffixes} as a list initially empty, and the symbol $c$ is appended to $B_l$ only if it is not the padding symbol \# (signaling that the originating string is shorter than $k$).
Note that, at the first iteration $l=0$, $N_0$ (which is set in cycle at line~\ref{alg:sort-suffixes:compute-N0})
is the sequence of indices $\langle 1,2,\ldots, m \rangle$, and $B_0$ is correctly computed as the sequence of the last characters $\langle s_1[k_1], \ldots, s_m[k_m] \rangle$ (i.e., $T_0$).

At the same time, the iteration $l$ computes the array $N_{l+1}$ to be used in the next iteration $l+1$ in order to compute the partial BWT $B_{l+1}$. Observe in fact that the i-$th$ $l$-suffix of $X_l$ is preceded by the symbol $c=T_l[q]$, where $q=N_l[i]$, and belongs to string $s_q$. Assuming that the i-$th$ suffix of $X_l$ is the h-$th$ suffix of $X_l$ which is preceded by that symbol $c$, then the $l+1$-suffix of $s_q$ is the h-$th$ suffix of $X_{l+1}$ starting with $c$. Furthermore, let us assume that there are $r$ $l$-suffixes of $X_l$ starting with a symbol smaller $c$. Then the $l+1$-suffix of $s_q$ is the (r+h)-$th$ suffix of $X_{l+1}$. By definition, it holds $N_{l+1}[r+h]=q$. The algorithm uses $\sigma+1$ lists $\mathcal{P}(\cdot)$, a list for each symbol in $\Sigma$, which are created at the beginning of iteration $l$. During the scanning of $N_l$ the index $q$ is added to the list $\mathcal{P}(c)$.

It is easy to prove that, at the end of iteration $l$, the concatenation of lists $\mathcal{P}(\cdot)$ (according to the order of the symbols in $\Sigma$) correctly gives $N_{l+1}$. Note that $N_{l+1}$ is computed also by the last iteration $k$, even though it is actually not used. Figure~\ref{fig:example-iteration-partialbwts} exemplifies the iteration $l=1$ of Algorithm~\ref{alg:sort-suffixes} which computes, for the set of reads of Figure~\ref{fig:example-vectors-B}, the partial BWT $B_1$ (see cycle for at line~\ref{alg:sort-suffixes:compute-projections-start}) and the array $N_2$ (line~\ref{alg:sort-suffixes:compute-N}), from the array $N_1$ (computed by the previous iteration $l=0$). Array $N_2$ will be used by the next iteration $l=2$ for computing the partial BWT $B_2$.

Observe that arrays $T_i$ must be kept in main memory, since they are not accessed sequentially (see Algorithm \ref{alg:ms-compute-partial}), and for this reason they cannot be stored in external memory.

\begin{algorithm2e}[htb!]
	\SetKwInOut{Input}{Input}\SetKwInOut{Output}{Output}
	\Input{The arrays $T_0, \ldots, T_{k}$}
	\For{$i\gets 1$ to $m$}{\label{alg:sort-suffixes:compute-N0}%
		$N_0[i] \gets i$\;
	}
	\For{$l\gets 0$ to $k$}{\label{alg:sort-suffixes:compute-partial}%
		\For{$t\gets 0$ to $\sigma$}{%
			$\mathcal{P}(c_t) \gets $ empty list\;
		}
		$B_l \gets $ empty list\;
		\For{$i\gets 1$ to $|N_{l}|$}{\label{alg:sort-suffixes:compute-projections-start}%
			$q \gets N_{l}[i]$\;
			$c \gets T_{l}[q]$\;
			\If{$c \ne \#$}{
				Append $c$ to $B_l$\;\label{alg:sort-suffixes:compute-B}
				Append $q$ to $\mathcal{P}(c)$\;
			}
		}
		$N_{l+1} \gets {\cal P}(c_0){\cal P}(c_1)\cdots{\cal P}(c_{\sigma})$\;\label{alg:sort-suffixes:compute-N}
	}
	\caption{Compute the partial BWTs $B_0, B_1, \ldots, B_k$}
	\label{alg:sort-suffixes}
\end{algorithm2e}

\begin{figure}[htbp]
	\renewcommand{\baselinestretch}{1.15}\normalsize
	(a) Input arrays $T_l$ ($0 \leq l \leq 4$) and array $N_1$\\[.5em]
	\begin{minipage}[c]{\textwidth}%
		\begin{minipage}{0.4\linewidth}%
			\begin{tabular}{|ccccc|}
				\hline
				$T_4$&$T_3$&$T_2$&$T_1$&$T_0$\\
				\hline
				\$&T&C&G&T\\
				\#&\#&\$&C&T\\
				\#&\$&A&C&A\\
				\hline
			\end{tabular}
		\end{minipage}%
		\begin{minipage}{0.55\linewidth}%
			\begin{tabular}{l}
				$N_{1} = \langle 3,1,2 \rangle$\\
			\end{tabular}
		\end{minipage}%
	\end{minipage}\\[1em]
	(b) Computing $B_1$ and $N_2$ (lines~\ref{alg:sort-suffixes:compute-projections-start}-\ref{alg:sort-suffixes:compute-N})\\[.5em]
	\begin{minipage}[c]{\textwidth}%
		\begin{minipage}{0.4\linewidth}%
			\begin{tabular}{|lc|}
				\hline
				$i=1$&\\\hline
				Read $3$ from $N_1[1]$&\\
				Read $C$ from $T_1[3]$&\\
				Append $C$ to $B_1$&\\
				Append $3$ to $\mathcal{P}($C$)$&\\
				\hline
			\end{tabular}
		\end{minipage}%
		\begin{minipage}{0.55\linewidth}%
			\begin{tabular}{ll}
				$\mathcal{P}(\$) = \langle  \rangle$&
				$\mathcal{P}($A$) = \langle  \rangle$\\
				$\mathcal{P}($C$) = \langle 3  \rangle$&
				$\mathcal{P}($G$) = \langle  \rangle$\\
				$\mathcal{P}($T$) = \langle  \rangle$&\\
				$B_1 = \langle C  \rangle$&\\
			\end{tabular}
		\end{minipage}\\[.5em]
		\begin{minipage}{0.4\linewidth}%
			\begin{tabular}{|lc|}
				\hline
				$i=2$&\\\hline
				Read $1$ from $N_1[2]$&\\
				Read $G$ from $T_1[1]$&\\
				Append $G$ to $B_1$&\\
				Append $1$ to $\mathcal{P}($G$)$&\\
				\hline
			\end{tabular}
		\end{minipage}%
		\begin{minipage}{0.55\linewidth}%
			\begin{tabular}{ll}
				$\mathcal{P}(\$) = \langle \rangle$&
				$\mathcal{P}($A$) = \langle \rangle$\\
				$\mathcal{P}($C$) = \langle 3 \rangle$&
				$\mathcal{P}($G$) = \langle 1 \rangle$\\
				$\mathcal{P}($T$) = \langle \rangle$&\\
				$B_1 = \langle C,G  \rangle$&\\
			\end{tabular}
		\end{minipage}\\[.5em]
		\begin{minipage}{0.4\linewidth}%
			\begin{tabular}{|lc|}
				\hline
				$i=3$&\\\hline
				Read $2$ from $N_1[3]$&\\
				Read $C$ from $T_1[2]$&\\
				Append $C$ to $B_1$&\\
				Append $2$ to $\mathcal{P}($C$)$&\\
				\hline
			\end{tabular}
		\end{minipage}%
		\begin{minipage}{0.55\linewidth}%
			\begin{tabular}{ll}
				$\mathcal{P}(\$) = \langle \rangle$&
				$\mathcal{P}($A$) = \langle \rangle$\\
				$\mathcal{P}($C$) = \langle 3,2 \rangle$&
				$\mathcal{P}($G$) = \langle 1 \rangle$\\
				$\mathcal{P}($T$) = \langle \rangle$&\\
				$B_1 = \langle C,G,C  \rangle$&\\
			\end{tabular}
		\end{minipage}\\[1em]
		\begin{minipage}{0.4\linewidth}%
			\begin{tabular}{|c|}
				\hline
				$N_2 \gets \mathcal{P}($C$)\mathcal{P}($G$)$ (line~\ref{alg:sort-suffixes:compute-N})\\\hline
			\end{tabular}
		\end{minipage}%
		\begin{minipage}{0.55\linewidth}%
			\begin{tabular}{l}
				$N_2 = \langle 3,2,1 \rangle$\\
			\end{tabular}
		\end{minipage}%
	\end{minipage}\\[1em]

	\caption{Example of iteration $l=1$ (computing $B_1$ and $N_2$ from $N_1$) of Algorithm~\ref{alg:sort-suffixes} for the set of strings presented in Figure~\ref{fig:example-vectors-B}. In (a) the three strings (excluding the \$) are depicted right-aligned inside a matrix. The sentinel \$ is placed immediately to the left of each string and the symbol \# pads the left empty space. Each $T_l$ is a column of the matrix. Array $N_1 = \langle 3,1,2\rangle$ gives the sorting of the $1$-suffixes of the input reads, that is,  suffix $A\$$ of the read number $3$, followed by suffix $T\$$ of the read number $1$ and finally suffix $T\$$ of the read number $2$. Array $N_2 = \langle 3,2,1\rangle$ gives the sorting of the $2$-suffixes: $CA\$$ of the read number 3, $CT\$$ of the read number 2, and $GT\$$ of the read number 1.
		Observe that lists $\mathcal{P}(\$), \mathcal{P}($A$), \mathcal{P}($T$)$ are empty during (and at the end of) the iterations of lines~\ref{alg:sort-suffixes:compute-projections-start}. Angle brackets are used for denoting both lists $\mathcal{P}(\cdot)$ and arrays $B_{1}$, $N_{1}$ and $N_2$. Indeed the latter three can be treated as lists since they are accessed sequentially.}
	\label{fig:example-iteration-partialbwts}
\end{figure}

\section{Backward strategy for computing the encoding $I_B$ and the LCP array}
\label{sec:merge-suffixes}
This section is devoted to describe the second step of our algorithm which computes the
BWT $B$ and the LCP array according to the backward strategy described in Section~\ref{sec:sketch-algorithm}.

First of all, we describe in detail how the single iteration works (see Algorithm~\ref{alg:compute-by-offset}). Then, we show how to enrich Algorithm~\ref{alg:compute-by-offset} in order to compute also the LCP array together with the encoding $I_X$ (see Algorithm~\ref{alg:compute-by-offset-with-lcp}). Finally, the complete procedure for computing $I_X=I_B$, from the partial BWTs $B_l$, is presented (see Algorithm~\ref{alg:merge-suffixes}) and is explained how to use the LCP array values in order to limit the iterations to the number strictly necessary to obtain $I_X$.

At this point, let us assume to have (iteration $p$) the encoding $I_{X^{p-1}}$ of the $p-1$-interleave $X^{p-1}$. We want to compute the encoding $I_{X^p}$ of the $p$-interleave $X^p$, by sorting the suffixes of $X^{p-1}$ by the first $p$ characters.
The algorithm implicitly obtains $X^{p}$ (suffixes sorted by the first $p$ characters) by implicitly reordering the characters preceding each one of the suffixes of $X^{p-1}$ (suffixes sorted by the first $p-1$ characters). We note that (by definition) for any $p$ from $0$ to $k$ the first $m$ entries of $I_{X^p}$ are all equal to $0$. Indeed, the $m$ $0$-suffixes (of the set $S$) occupy always the first $m$ positions for any value of $p$.

Before entering the details of iteration $p$ (see Algorithm~\ref{alg:compute-by-offset}), we give the idea of the algorithm. Let us consider the suffix $X^{p-1}[q]$ whose length is $l=I_{X^{p-1}}[q]$. Let $c$ be the symbol preceding such suffix.
Let $\mathcal{X}^{p-1}_s$ be the subset of suffixes of $X^{p-1}$ preceded by a symbol smaller than $c$, and let $\mathcal{X}^{p-1}_e$ be the subset of suffixes at a position $q' < q$ of $X^{p-1}$ preceded by the symbol $c$.
It is easy to show that the suffix $cX^{p-1}[q]$ (with length $l+1$) is greater (by the first $p$ characters) than all and only those suffixes $c_px$, such that $x \in \mathcal{X}^{p-1}_s \cup \mathcal{X}^{p-1}_e$ and $c_p$ is the symbol preceding $x$.
Therefore, the suffix $cX^{p-1}[q]$ (with length $l+1$) will occupy in $X^{p}$ the position $q'=|\mathcal{X}^{p-1}_s \cup \mathcal{X}^{p-1}_e|+1$, and $I_{X^p}[q']$ will be equal to $l+1$. In other words, position $q'$ of suffix $cX^{p-1}[q]$ on $X^p$ is given by the sum $h_s+h_e+1$ where $h_s$ is the number of suffixes of $X_{p-1}$ preceded by a symbol smaller than $c$ and $h_e$ is the number of suffixes, which are before $X^{p-1}[q]$ and preceded by symbol $c$.

Algorithm~\ref{alg:compute-by-offset} creates a set of $\sigma+1$ lists $\mathcal{I}(c_0), \mathcal{I}(c_1), \ldots, \mathcal{I}(c_{\sigma})$ containing at the end of iteration $p$ the partitioning of the encoding of $I_{X^{p}}$ by the first character $c_i$ of the suffixes of $X^p$.
Since the list $\mathcal{I}(c_0)$ (we recall that $c_{0}=\$$) is related to the  $0$-suffixes, then it is fixed over the iterations $p$ and is always composed of $m$ $0$s (and it is initialized at the beginning of the procedure). Therefore, at the end, the algorithm produces $I_{X^{p}}$ (see line~\ref{alg:compute-by-offset-concat}) as the concatenation $\mathcal{I}(c_0) \mathcal{I}(c_1) \cdots \mathcal{I}(c_{\sigma})$.
In order to fill the lists $\mathcal{I}(\cdot)$, Algorithm~\ref{alg:compute-by-offset} performs a scan of $I_{X^{p-1}}$. For each position $q$, it obtains $l=I_{X^{p-1}}[q]$, that is the length of the $q$-$th$ suffix of $X^{p-1}$, and the symbol $c$ preceding such suffix (see line~\ref{alg:compute-by-offset:get-symbol}). Vector $pos$ allows to read $c$ from the correct position of array $B_l$.
If $c \ne \$$, then $l$ is not greater than the length of the input string originating the suffix $X^{p-1}[q]$, %
and the integer $l+1$ is appended to the list $\mathcal{I}(c)$.
Otherwise, if $c = c_0 = \$$, it moves to the next position $q+1$. Indeed in this case, the value $l$ is greater than the length of the input string originating the suffix $X^{p-1}[q]$, thus the $cX^{p-1}[q]$ obtained is a $0$-suffix whose related integer $0$ should be appended to the list $\mathcal{I}(c_0)$, which is fixed (by definition) over the iterations.

This approach is alternative to the one presented in~\cite{Bonizzoni2016ANL} first and then implemented in \cite{jcb-bonizzoni-2019}. In fact, the iteration $p$ is a backward extension of the suffixes sorted by the first $p-1$ characters in order to obtain the suffixes sorted by the first $p$ characters. Instead the strategy presented in~\cite{Bonizzoni2016ANL} is based on a forward extension of the $(p-1)$-prefixes of the suffixes in order to obtain the ordering given by the encoding $I_{X^{p}}$.

The following theorem proves the correctness of Algorithm~\ref{alg:compute-by-offset}.

\begin{figure}[htbp]
	\renewcommand{\baselinestretch}{1.15}\normalsize
	(a) Input interleave $I_{X^0}$\\[.5em]
	\begin{minipage}[c]{\textwidth}%
		\begin{minipage}{.95\linewidth}%
			\begin{tabular}{l}
				$I_{X^0} = \langle 0,0,0,1,1,1,2,2,2,3,3,4 \rangle$\\
			\end{tabular}
		\end{minipage}%
	\end{minipage}\\[1em]
	\text{(b) Initialization of lists $\mathcal{I}(\cdot)$}\\[0.5em]
	\begin{minipage}[c]{\textwidth}%
		\begin{minipage}{.95\linewidth}%
			\begin{tabular}{ll}
				$\mathcal{I}(\$) = \langle 0,0,0 \rangle$&
				$\mathcal{I}($A$) = \langle  \rangle$\\
				$\mathcal{I}($C$) = \langle  \rangle$&
				$\mathcal{I}($G$) = \langle  \rangle$\\
				$\mathcal{I}($T$) = \langle  \rangle$&\\
			\end{tabular}
		\end{minipage}%
	\end{minipage}
	\vspace{0.5em}\\
	\text{(c) Scan of the interleave $I_{X^0}$ (lines~\ref{alg:compute-by-offset:for}-\ref{alg:compute-by-offset:for-end2})}\\[.5em]
	\begin{minipage}[t]{\textwidth}%
		\begin{minipage}[t]{.32\linewidth}%
			\begin{tabular}{|lc|}
				\hline
				$i=1$&\\\hline
				Read $0$ from $I_{X^0}$&\\
				Read $T$ from $B_0$&\\
				Append $1$ to $\mathcal{I}($T$)$&\\
				\hline
			\end{tabular}
		\end{minipage}%
		\begin{minipage}[t]{.32\linewidth}%
			\begin{tabular}{|lc|}
				\hline
				$i=2$&\\\hline
				Read $0$ from $I_{X^0}$&\\
				Read $T$ from $B_0$&\\
				Append $1$ to $\mathcal{I}($T$)$&\\
				\hline
			\end{tabular}
		\end{minipage}%
		\begin{minipage}[t]{.32\linewidth}%
			\begin{tabular}{|lc|}
				\hline
				$i=3$&\\\hline
				Read $0$ from $I_{X^0}$&\\
				Read $A$ from $B_0$&\\
				Append $1$ to $\mathcal{I}($A$)$&\\
				\hline
			\end{tabular}
		\end{minipage}\\[.5em]
		\begin{minipage}[t]{.32\linewidth}%
			\begin{tabular}{|lc|}
				\hline
				$i=4$&\\\hline
				Read $1$ from $I_{X^0}$&\\
				Read $C$ from $B_1$&\\
				Append $2$ to $\mathcal{I}($C$)$&\\
				\hline
			\end{tabular}
		\end{minipage}%
		\begin{minipage}[t]{.32\linewidth}%
			\begin{tabular}{|lc|}
				\hline
				$i=5$&\\\hline
				Read $1$ from $I_{X^0}$&\\
				Read $G$ from $B_1$&\\
				Append $2$ to $\mathcal{I}($G$)$&\\
				\hline
			\end{tabular}
		\end{minipage}%
		\begin{minipage}[t]{.32\linewidth}%
			\begin{tabular}{|lc|}
				\hline
				$i=6$&\\\hline
				Read $1$ from $I_{X^0}$&\\
				Read $C$ from $B_1$&\\
				Append $2$ to $\mathcal{I}($C$)$&\\
				\hline
			\end{tabular}
		\end{minipage}\\[.5em]
		\begin{minipage}[t]{.32\linewidth}%
			\begin{tabular}{|lc|}
				\hline
				$i=7$&\\\hline
				Read $2$ from $I_{X^0}$&\\
				Read $A$ from $B_2$&\\
				Append $3$ to $\mathcal{I}($A$)$&\\
				\hline
			\end{tabular}
		\end{minipage}%
		\begin{minipage}[t]{.32\linewidth}%
			\begin{tabular}{|lc|}
				\hline
				$i=8$&\\\hline
				Read $2$ from $I_{X^0}$&\\
				Read $\$$ from $B_2$&\\
				&\\
				\hline
			\end{tabular}
		\end{minipage}%
		\begin{minipage}[t]{.32\linewidth}%
			\begin{tabular}{|lc|}
				\hline
				$i=9$&\\\hline
				Read $2$ from $I_{X^0}$&\\
				Read $C$ from $B_2$&\\
				Append $3$ to $\mathcal{I}($C$)$&\\
				\hline
			\end{tabular}
		\end{minipage}\\[.5em]
		\begin{minipage}[t]{.32\linewidth}%
			\begin{tabular}{|lc|}
				\hline
				$i=10$&\\\hline
				Read $3$ from $I_{X^0}$&\\
				Read $\$$ from $B_3$&\\
				&\\
				\hline
			\end{tabular}
		\end{minipage}%
		\begin{minipage}[t]{.32\linewidth}%
			\begin{tabular}{|lc|}
				\hline
				$i=11$&\\\hline
				Read $3$ from $I_{X^0}$&\\
				Read $T$ from $B_3$&\\
				Append $4$ to $\mathcal{I}($T$)$&\\
				\hline
			\end{tabular}
		\end{minipage}%
		\begin{minipage}[t]{.32\linewidth}%
			\begin{tabular}{|lc|}
				\hline
				$i=12$&\\\hline
				Read $4$ from $I_{X^0}$&\\
				Read $\$$ from $B_4$&\\
				&\\
				\hline
			\end{tabular}
		\end{minipage}\\[.5em]
		\begin{minipage}[t]{.64\linewidth}%
			\begin{tabular}{ll}
				$\mathcal{I}(\$) = \langle 0,0,0 \rangle$&
				$\mathcal{I}($A$) = \langle 1,3  \rangle$\\
				$\mathcal{I}($C$) = \langle 2,2,3 \rangle$&
				$\mathcal{I}($G$) = \langle 2 \rangle$\\
				$\mathcal{I}($T$) = \langle 1,1,4  \rangle$&\\
			\end{tabular}
		\end{minipage}%
	\end{minipage}\\[1em]
	(d) Computing the interleave $I_{X^1}$ (line~\ref{alg:compute-by-offset-concat})\\[.5em]
	\begin{minipage}[c]{\textwidth}%
		\begin{minipage}[t]{.48\linewidth}%
			\begin{tabular}{|l|}
				\hline
				$I_{X^1} \gets \mathcal{I}(\$)\mathcal{I}($A$)\mathcal{I}($C$)\mathcal{I}($G$)\mathcal{I}($T$)$\\
				\hline
			\end{tabular}
		\end{minipage}%
		\begin{minipage}[t]{.48\linewidth}%
			\begin{tabular}{l}
				$I_{X^1} = \langle 0,0,0,1,3,2,2,3,2,1,1,4 \rangle$\\
			\end{tabular}
		\end{minipage}%
	\end{minipage}%
	\caption{Example of computing $I_{X^1}$ from $I_{X^0}$ (see Algorithm~\ref{alg:compute-by-offset}) for the set of reads presented in Figure~\ref{fig:example-vectors-B}. Angle brackets are used for denoting both lists $\mathcal{I}(\cdot)$ and arrays $I_{X^0}$, $I_{X^1}$. Indeed the latter two can be treated as lists since they are accessed sequentially. The two encodings $I_{X^1}$ and $I_{X^0}$ are those ones reported in Figure~\ref{fig:example-merge-suffixes}.}
	\label{fig:example-compute-interleave}
\end{figure}

\begin{theorem}
	\label{theorem:correctness1}
	If Algorithm~\ref{alg:compute-by-offset} receives in input the encoding $I_{X^{p-1}}$ of the $p-1$-interleave $X^{p-1}$, then it computes the encoding $I_{X^{p}}$ of the $p$-interleave $X^{p}$.
\end{theorem}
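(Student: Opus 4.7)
The plan is to show that, for each $c\in\Sigma$, the list $\mathcal{I}(c)$ built by Algorithm~\ref{alg:compute-by-offset} is exactly the contiguous block of $I_{X^{p+1}}$ that encodes the suffixes whose first character is $c$. Since the $\prec_{p+1}$ relation first compares suffixes by their $(p+1)$-prefix and hence by their leading character, these blocks appear in $I_{X^{p+1}}$ in the alphabet order $\$,c_1,\ldots,c_\sigma$, so the concatenation on line~\ref{alg:compute-by-offset-concat} reconstructs $I_{X^{p+1}}$.

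The technical core is the following equivalence, to be proved directly from Definition~\ref{def:prec-p-relation}: if $\gamma_1,\gamma_2$ are two suffixes of $S$ both starting with the same symbol $c\neq\$$, and $\gamma'_1,\gamma'_2$ denote the strictly shorter suffixes of the same two input strings obtained by stripping the leading $c$, then $\gamma_1 \prec_{p+1} \gamma_2$ if and only if $\gamma'_1 \prec_p \gamma'_2$. All three cases of the definition transfer across this shift: the lex comparison of the $(p+1)$-prefixes drops its shared leading $c$ and reduces to a lex comparison of the $p$-prefixes of $\gamma'_1,\gamma'_2$; the length comparison is unchanged because $|\gamma_i|=|\gamma'_i|+1$; and the string-index comparison is unchanged because $\gamma_i$ and $\gamma'_i$ are suffixes of the same $s_q$.

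Next I would exhibit, for each $c\neq\$$, a bijection between the positions $i$ of $I_{X^p}$ whose decoded suffix has length $l<k$ and preceding symbol $c$, and the $(l+1)$-suffixes of $S$ that start with $c$: writing $\alpha=s_q[k-l+1:]$ for the $l$-suffix decoded at position $i$, its preceding symbol is $s_q[k-l]=c$, so prepending $c$ to $\alpha$ yields the $(l+1)$-suffix $s_q[k-l:]$, which starts with $c$; every $(l+1)$-suffix starting with $c$ arises in this way. Because the scan of $I_{X^p}$ visits positions in $\prec_p$-order, $\mathcal{I}(c)$ receives the lengths of the $(l+1)$-suffixes starting with $c$ listed in the $\prec_p$-order of their shifted counterparts, which by the equivalence above is precisely the $\prec_{p+1}$-order of the suffixes starting with $c$.

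Finally I would dispose of $\mathcal{I}(\$)$ separately: the only suffixes of $S$ whose first character is $\$$ are the $m$ $0$-suffixes, and $\prec_{p+1}$ orders them purely by string index (their $(p+1)$-prefix and length are identical), so they produce the fixed sequence $\langle 0,\ldots,0 \rangle$ that the algorithm installs as initialization. Assembling these blocks yields $I_{X^{p+1}}=\mathcal{I}(\$)\mathcal{I}(c_1)\cdots\mathcal{I}(c_\sigma)$, proving the claim. The step that will require the most care is the pivotal equivalence together with the precise identification of the preceding symbol read by the algorithm (via the partial BWT $B_l$) with the leading symbol of the newly extended $(l+1)$-suffix; once that correspondence is nailed down, the rest is a routine bookkeeping argument on the alphabet partition.
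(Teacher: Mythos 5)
Your proposal is correct and follows essentially the same route as the paper's proof: partition the suffixes of $X^{p+1}$ by their leading character, show that within each block the $\prec_{p+1}$-order of the extended suffixes $c\gamma'$ coincides with the $\prec_{p}$-order (i.e., the scan order) of the stripped suffixes $\gamma'$, and concatenate the blocks in alphabet order. Your explicit ``shift equivalence'' ($\gamma_1 \prec_{p+1} \gamma_2$ iff $\gamma'_1 \prec_p \gamma'_2$, checked against all three cases of Definition~\ref{def:prec-p-relation}) is a slightly sharper formulation of the step the paper argues only for the lexicographic case, but it is the same argument.
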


\begin{proof}

	Observe that the $(p-1)$-prefix (prefix with length $p-1$) of the i-$th$ suffix of
	$X_l$ is the suffix of the $p$-prefix of a suffix %
	of $X_{l+1}$,
	starting with the symbol $c=B_l[i]$.
	Then, line 7 of Algorithm~\ref{alg:compute-by-offset} appends length $l+1$ to the list $\mathcal{I}(c)$. Observe that line~\ref{alg:compute-by-offset:for-end2} implicitly computes a partitioning of the suffixes in $X^{p}$, according to their starting symbol, into lists $\mathcal{I}(c_0), \mathcal{I}(c_1), \ldots, \mathcal{I}(c_{\sigma})$, where $\mathcal{I}(c_i)$ gives the ordering, by the first $p$ characters, of the suffixes starting with symbol $c_i$. Each list $\mathcal{I}(c_i)$ contains (at line~\ref{alg:compute-by-offset-concat}) the lengths of such suffixes.

	Furthermore, given two distinct suffixes $c_{1}x_1$ and
	$c_{2}x_2$ such that  $c_{1}x_1$ is smaller (by the first $p-1$ characters) than $c_{2}x_2$,  either they begin with two different  symbols $c_{1} <  c_{2}$, or they both start with the same symbol, i.e., $c_{1} = c_{2}$.
	Let $L(c_{1})$ and $L(c_{2})$ be the partitions of $X^{p-1}$ containing the suffixes starting with $c_{1}$ and
	$c_{2}$ (respectively). Then, in $X^{p}$ all suffixes in $L(c_{1})$ precede those in $L(c_{2})$.
	Inside the list $L(c_{1})$, the ordering of two suffixes $c_{1}x_{i}$ and $c_{1}x_{j}$ by the first $p$ characters is the same as in $X^{p-1}$.
	Indeed, $cx_i[:p-1]$ is  lexicographically smaller than $cx_j[:p-1]$ if and only if $x_i[:p-1]$ is lexicographically smaller than $x_j[:p-1]$.
	It follows that $X^{p}$ consists of the concatenation of $L(c_i)$  according to the lexicographic ordering of symbols of  alphabet $\Sigma$, and thus line~\ref{alg:compute-by-offset-concat} of Algorithm~\ref{alg:compute-by-offset} computes the encoding $I_{X^{p}}$ of  $X^{p}$.
\end{proof}

\begin{algorithm2e}[tb!]
	\SetKwInOut{Input}{Input}\SetKwInOut{Output}{Output}

	$\mathcal{I}(c_0) \gets 0,0, \ldots, 0$\;
	\tcp{array of $m$ $0$s}
	$\mathcal{I}(c_1), \ldots, \mathcal{I}(c_{\sigma}) \gets $ empty lists\;

	$pos \gets$ integer array with $k+1$ $0$s\;%

	\For{$q\gets 1$ to $|I_{X^{p-1}}|$}{\label{alg:compute-by-offset:for}%
		$l \gets I_{X^{p-1}}[q]$\;\label{alg:compute-by-offset:for-start}
		$c \gets$ $B_l[pos[l]]$\;\label{alg:compute-by-offset:get-symbol}%
		$pos[l] \gets pos[l] + 1$\;%
		\If{$c \ne \$$}{
			Append $l+1$ to $\mathcal{I}(c)$\;\label{alg:compute-by-offset:for-end2}
		}
	}

	$I_{X^{p}} \gets \mathcal{I}(c_0) \mathcal{I}(c_1) \cdots \mathcal{I}(c_{\sigma})$\;\label{alg:compute-by-offset-concat}

	\caption{Compute $I_{X^{p}}$ from $I_{X^{p-1}}$}\label{alg:compute-by-offset}
\end{algorithm2e}

In the following we will describe how to compute the LCP array of the input dataset. Similarly to the computation of the BWT $B$, the LCP array will be constructed iteratively. More precisely, the LCP array will be constructed by considering prefixes of the suffixes by increasing length.
At this point, we can describe how to update Algorithm~\ref{alg:compute-by-offset} (iteration $p$) in order to compute (at the end of the iterations) also the LCP array.

To this aim we must introduce the following definition.

\begin{definition}
	\label{definition:p-LCP-array}
	Given the LCP array, $\mathit{LCP}_p$ is defined such that $\mathit{LCP}_p[i]=\min\{LCP[i],p\}$.
\end{definition}

Observe that $\mathit{LCP}_p[i]$ is the length of the longest prefix shared by the $p$-prefix of $X^p[i]$ and the $p$-prefix of $X^p[i-1]$.
We note that, when a suffix in $X^p$ is shorter than $p$, then its $p$-prefix (considered for $LCP_p$) is the whole suffix itself (\$ excluded).

The array $\mathit{LCP}_k$  is equal to the LCP array of the input set $S$, and
$\mathit{LCP}_0$ contains all $0$s, except for $\mathit{LCP}_0[1]$ that is equal to $-1$.
In Figure~\ref{fig:example-merge-suffixes} $\mathit{LCP}_0$ and $\mathit{LCP}_1$ are reported for the input set of Figure~\ref{fig:example-vectors-B}.

The LCP array is computed iteratively by starting from $\mathit{LCP}_0$. Now we describe the single iteration $p$ for computing $\mathit{LCP}_{p}$ from $\mathit{LCP}_{p-1}$.
Algorithm~\ref{alg:compute-by-offset-with-lcp} extends Algorithm~\ref{alg:compute-by-offset} in order to compute $I_{X^{p}}$ and $\mathit{LCP}_{p}$ from $I_{X^{p-1}}$ and $\mathit{LCP}_{p-1}$.

Algorithm~\ref{alg:compute-by-offset-with-lcp} builds  a set of $\sigma+1$ lists $\mathcal{L}(c_0), \mathcal{L}(c_1), \ldots, \mathcal{L}(c_{\sigma})$ containing the partitioning of the elements of $\mathit{LCP}_{p}$ by the first character $c_i$ ($0 \leq i \leq \sigma$) of the related suffix.
Since the list $\mathcal{L}(c_0=\$)$ is related to the $0$-suffixes, it is fixed for any iteration and is composed of $-1$ followed by $m-1$ $0$s. Moreover, observe that the first element of each list $\mathcal{L}(c_i)$ ($1 \leq i \leq \sigma$) is always $0$.
Finally, Algorithm~\ref{alg:compute-by-offset-with-lcp} concatenates all the lists $\mathcal{L}(c_0), \mathcal{L}(c_1), \ldots, \mathcal{L}(c_{\sigma})$, thus producing $\mathit{LCP}_{p}$ (see line~\ref{alg:compute-by-offset-with-lcp-concat}).

Before giving the detail of computing the single lists $\mathcal{L}(\cdot)$, we need to introduce the following function.
Given a position $q$ and a symbol $c \ne \$$, the function $\alpha_p(q,c)$ is the length of the longest prefix shared by the $p$-prefixes of suffixes $X^p[q]$ and $X^p[h]$ where $h$ is the biggest position before $q$ related to a suffix $X^p[h]$ preceded by symbol $c$. If  such $h$ does not exist, then $\alpha_p(q,c) = -1$.

In the following, given two strings $x_1, x_2$, we denote (respectively) by $lcp_p(x_1, x_2)$ and $lcp(x_1, x_2)$ the length of the longest common prefix between the $p$-prefixes of $x_1$ and $x_2$, and the length of the longest common prefix between $x_1$ and $x_2$ (that is, $lcp(x_1, x_2)=lcp_k(x_1,x_2)$).
The following proposition relates the values of $\alpha_{p-1}(q,c)$ and $\mathit{LCP}_{p}$ and it  is a direct  consequence of their definitions.

\begin{proposition}
	\label{proposition:lcp-alpha}
	Let $cx_{1}$  and $cx_{2}$ be two consecutive suffixes of $X^{p}$, and let $x_{2}$ be the q-$th$ suffix of $X^{p-1}$.
	Then $\min\{p, lcp(cx_{1}, cx_{2})\} = 1 + \alpha_{p-1}(q,c)$.
\end{proposition}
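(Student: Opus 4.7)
The plan is to reduce the claim to a simple identity relating $\prec_{p+1}$-ordering of suffixes that share the first symbol $c$ to the $\prec_p$-ordering of the corresponding ``tails,'' and then chase the definitions of $\alpha_p$ and $LCP_{p+1}$.

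First I would unpack the hypothesis. Since $cw_1$ and $cw_2$ are consecutive in $X^{p+1}$ and both begin with $c$, I would write $cw_1 \in X_{l_1+1}$ and $cw_2 \in X_{l_2+1}$ for some $l_1, l_2$, so that $w_1 \in X_{l_1}$ and $w_2 \in X_{l_2}$ are the suffixes obtained by stripping the leading $c$, and $c$ is in each case the symbol preceding $w_1$ and $w_2$ in their source strings (i.e.\ $c = B_{l_1}[\cdot] = B_{l_2}[\cdot]$ at the appropriate ranks). The key computation is then: for any two suffixes $cw_1$ and $cw_2$ with a common leading symbol, $cw_1 \prec_{p+1} cw_2$ if and only if $w_1 \prec_p w_2$. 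This is immediate from Definition~\ref{def:prec-p-relation}: the $(p+1)$-prefix condition becomes $w_1[:p]$ versus $w_2[:p]$; the length condition $|cw_1|<|cw_2|$ is equivalent to $|w_1|<|w_2|$; and the source-string indices are unchanged.

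Using this equivalence, I would conclude that the block of suffixes of $X^{p+1}$ starting with $c$ is obtained from the subsequence of $X^p$ consisting of those suffixes that are preceded by $c$ in their source string, by prepending $c$ to each. In particular, $w_1$ and $w_2$ are consecutive entries of this subsequence of $X^p$; since $w_2 = X^p[i]$ by hypothesis, $w_1 = X^p[h]$ where $h$ is the largest index with $h<i$ and $X^p[h]$ preceded by $c$, which is precisely the index singled out in the definition of $\alpha_p(i,c)$. Hence $\alpha_p(i,c) = lcp_p(w_1, w_2) = \min\{p,\, lcp(w_1,w_2)\}$.

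Finally, since $cw_1$ and $cw_2$ agree in their first position, $lcp(cw_1, cw_2) = 1 + lcp(w_1, w_2)$. Combining these identities,
\[
\min\{p+1,\, lcp(cw_1, cw_2)\} \;=\; \min\{p+1,\, 1 + lcp(w_1, w_2)\} \;=\; 1 + \min\{p,\, lcp(w_1, w_2)\} \;=\; 1 + \alpha_p(i,c),
\]
which is the desired equality. I would also briefly note the degenerate case in which no valid $h$ exists (so $cw_2$ is the first suffix starting with $c$ in $X^{p+1}$): here $\alpha_p(i,c) = -1$ by definition, and the preceding suffix in $X^{p+1}$ must start with a smaller symbol, whence $lcp(cw_1, cw_2)=0$, matching $1+\alpha_p(i,c)=0$. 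The only delicate point is step two, where one must carefully observe that each of the three clauses of Definition~\ref{def:prec-p-relation} transports along the prepending of $c$; everything else is bookkeeping.
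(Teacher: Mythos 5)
Your proof is correct. Note that the paper offers no proof of this proposition at all --- it is introduced with the remark that it ``is a direct consequence of their definitions'' --- so your argument supplies exactly the details the authors leave implicit, and it is the natural route: the key step, that prepending the common symbol $c$ carries the $\prec_{p}$ order of the tails to the $\prec_{p+1}$ order of the $c$-initial suffixes (so that $w_{1}=X^{p}[h]$ for precisely the index $h$ appearing in the definition of $\alpha_{p}(i,c)$), is the same order-preservation fact the authors use inside their proof of Theorem~\ref{theorem:correctness1}. The remaining identities $lcp(cw_{1},cw_{2})=1+lcp(w_{1},w_{2})$ and $\min\{p+1,1+x\}=1+\min\{p,x\}$, together with your handling of the degenerate case $\alpha_{p}(i,c)=-1$, close the argument; nothing is missing.
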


During the scan of the encoding $I_{X^{p-1}}$, the value $\mathit{LCP}_{p-1}[i]$ is obtained (see line 13 of Algorithm \ref{alg:compute-by-offset-with-lcp}).
The function $\alpha_{p-1}(q,c)$ is maintained in the array $\alpha$ of size $\sigma-1$ initially set to $\sigma-1$ values $-1$s%
, and updated in the cycle at line~\ref{alg:compute-by-offset-with-lcp-update-alpha}. The main invariant of Algorithm~\ref{alg:compute-by-offset-with-lcp} is that, at
line~\ref{alg:compute-by-offset-with-lcp:compute-lcp}, the variable $\alpha[c]$ is equal
to  $\alpha_{p-1}(q, c)$---this invariant is a consequence of the following
Lemma~\ref{lemma:computing-alpha} and can be proved by a direct inspection of Algorithm~\ref{alg:compute-by-offset-with-lcp}. The value $\alpha[c]$ incremented by $1$ is appended to the list $\mathcal{L}(c)$.

\begin{lemma}
	\label{lemma:computing-alpha}
	Let $x_{1}$  and $x_{2}$ be respectively the j-$th$ and the q-$th$ suffixes of $X^{p-1}$, such that $j<q$, and let $c$ be the  symbol  preceding  suffix $x_{1}$.
	If  every suffix at a position $t$ between $j$ and $q$ ($j<t<q$), is not preceded by the symbol $c$,  then it holds that $\alpha_{p-1}(q,c) = \min_{j< h\le q}\{\mathit{LCP}_{p-1}[h]\}$.
\end{lemma}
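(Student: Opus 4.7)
The plan is to first use the hypothesis on the intermediate suffixes to recognize $\alpha_p(i,c)$ as the $p$-truncated longest common prefix of two specific entries of $X^p$, and then to establish the classical ``range-minimum equals longest common prefix'' identity for the $\prec_p$-sorted array $X^p$ together with its truncated LCP array $LCP_p$.

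First, I would unwind the definitions. By hypothesis $X^p[j] = w_{1}$ is preceded by $c$ and no $X^p[t]$ with $j<t<i$ is preceded by $c$, so $j$ is exactly the largest index less than $i$ such that $X^p[j]$ is preceded by $c$. By the definition of $\alpha_p(i,c)$ this gives $\alpha_p(i,c) = lcp_p(X^p[j], X^p[i])$, and the lemma reduces to proving
\[ lcp_p(X^p[j], X^p[i]) \;=\; \min_{j < h \le i} LCP_p[h]. \]

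I would then prove this identity by the usual two-sided argument, adapted to $\prec_p$. For the ``$\le$'' direction, set $L = lcp_p(X^p[j], X^p[i]) \le p$. Since $X^p$ is $\prec_p$-sorted, the $p$-prefixes of $X^p[j]$ and $X^p[i]$ lexicographically sandwich every intermediate $p$-prefix $X^p[h][:p]$ for $j \le h \le i$; as both endpoint $p$-prefixes start with the same $L$-character string, each intermediate $p$-prefix must begin with it too. Consecutive pairs $X^p[h-1], X^p[h]$ thus agree on their first $L$ characters, so $LCP_p[h] \ge L$ for every $j < h \le i$ and the minimum is at least $L$. For the ``$\ge$'' direction, set $M = \min_{j<h\le i} LCP_p[h]$; each consecutive pair $X^p[h-1], X^p[h]$ shares a $p$-prefix of length at least $M$ by the definition of $LCP_p$, and transitivity of ``sharing a length-$M$ prefix'' along the chain $X^p[j], X^p[j+1], \ldots, X^p[i]$ immediately yields $lcp_p(X^p[j], X^p[i]) \ge M$.

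The only subtlety, and the main place I would be careful, is the lexicographic sandwich argument in the ``$\le$'' direction when $L=p$, i.e., when the common prefix saturates the truncation cap. In that regime the $\prec_p$-ordering among suffixes sharing the same $p$-prefix collapses to the length-then-index tie-breaking clauses of Definition~\ref{def:prec-p-relation}; one has to verify that any suffix $\prec_p$-sandwiched between two suffixes sharing the full $p$-character prefix still shares that same $p$-character prefix. Once this is checked against the three clauses of Definition~\ref{def:prec-p-relation}, both bounds close and the equality follows.
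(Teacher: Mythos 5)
Your proof is correct and follows essentially the same route as the paper's: use the hypothesis to identify $j$ as the largest index below $i$ whose suffix is preceded by $c$, so that $\alpha_p(i,c)=lcp_p(X^p[j],X^p[i])$, and then apply the range-minimum/LCP identity $lcp_p(X^p[j],X^p[i])=\min_{j<h\le i}\{LCP_p[h]\}$. The only difference is the level of detail: the paper declares the second step ``immediate to verify,'' while you supply the standard two-sided argument (including the check that the sandwich argument survives the tie-breaking clauses when the common prefix saturates at length $p$).
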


\begin{proof}
	Since $c$ is not the symbol that precedes the suffix at position $t$ \ with $j<t<q$, then by definition of $\alpha_{p-1}(q,c)$, it must be that $\alpha_{p-1}(q,c)= lcp_{p-1}(X^{p-1}[j], X^{p}[q])$, since the $j$  is the largest integer less than $q$ for which the j-$th$ suffix is preceded by symbol $c$.
	Since it is immediate to verify that $lcp_{p-1}(X^{p}[j], X^{p}[q])= \min_{j< h\le q}\{\mathit{LCP}_{p-1}[h]\}$, the lemma easily follows.
\end{proof}

The previous argument allows us to prove the following theorem which, combined with
Theorem~\ref{theorem:correctness1} completes the correctness of Algorithm~\ref{alg:compute-by-offset-with-lcp}.

\begin{theorem}
	\label{theorem:correctness2}
	Given as input $\mathit{LCP}_{p-1}$ and the partial BWTs $B_{0}, B_1, \ldots, B_k$,
	Algorithm~\ref{alg:compute-by-offset-with-lcp} computes $\mathit{LCP}_{p}$.
\end{theorem}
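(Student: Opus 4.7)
The plan is to leverage Theorem~\ref{theorem:correctness1}, which already guarantees that $I_{X^{p+1}}$ is correctly obtained via the block-wise concatenation $\mathcal{I}(c_0)\mathcal{I}(c_1)\cdots\mathcal{I}(c_{\sigma})$, together with Proposition~\ref{proposition:lcp-alpha} and Lemma~\ref{lemma:computing-alpha}. Since $LCP_{p+1}$ is produced by the analogous concatenation $\mathcal{L}(c_0)\mathcal{L}(c_1)\cdots\mathcal{L}(c_{\sigma})$ with exactly the same block boundaries, it suffices to prove: (i) each list $\mathcal{L}(c)$ contains, in the correct order, the values $LCP_{p+1}[\cdot]$ for all consecutive pairs of suffixes in $X^{p+1}$ that both start with~$c$; and (ii) the first element of each list correctly represents the LCP across a block boundary.

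Claim (ii) is the easy part. The block for $c_0=\$$ is fixed by definition, yielding $\mathcal{L}(\$)=\langle -1,0,\ldots,0\rangle$. For every other symbol $c_i$, the first suffix starting with $c_i$ in $X^{p+1}$ is immediately preceded, \wrt the $\prec_{p+1}$ order, by the last suffix starting with the strictly smaller character $c_{i-1}$, so their $(p+1)$-truncated lcp is $0$; this is exactly what the algorithm appends, since $\alpha[c_i]$ is initialized to $-1$ at line~\ref{alg:compute-by-offset-with-lcp-init-alpha} and the value appended to $\mathcal{L}(c_i)$ is $\alpha[c_i]+1=0$.

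Claim (i) hinges on establishing the loop invariant stated informally in the text: at every execution of line~\ref{alg:compute-by-offset-with-lcp:compute-lcp}, when position $i$ of $I_{X^p}$ is being processed and the character preceding $X^{p}[i]$ is $c$, the entry $\alpha[c]$ stores precisely $\alpha_{p}(i,c)$. I would prove this invariant by induction on $i$, strengthening it to: for every symbol $c'$, after processing position $i$, the entry $\alpha[c']$ equals the minimum of $LCP_{p}[h]$ taken over all $h$ strictly greater than the last index $\le i$ at which $c'$ preceded a scanned suffix (and $-1$ if no such index exists). The inductive step is a routine inspection of the update at line~\ref{alg:compute-by-offset-with-lcp-update-alpha}: the entry for the currently scanned symbol is reset, while the entries for all other symbols are replaced by the minimum of their previous value and $LCP_{p}[i]$. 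Lemma~\ref{lemma:computing-alpha} is precisely the closed form of this running minimum, so the invariant immediately yields $\alpha[c]=\alpha_{p}(i,c)$.

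With the invariant in hand, Proposition~\ref{proposition:lcp-alpha} closes the argument: if $(cw_{1},cw_{2})$ is the consecutive pair of suffixes produced in $X^{p+1}$ when $c$ is appended to block $\mathcal{L}(c)$ at step $i$, then the value $\alpha[c]+1=\alpha_{p}(i,c)+1$ equals $\min\{p+1,lcp(cw_{1},cw_{2})\}=LCP_{p+1}[\cdot]$. Since the order in which these values are appended to $\mathcal{L}(c)$ coincides, by Theorem~\ref{theorem:correctness1}, with the $\prec_{p+1}$-order of the $c$-starting suffixes in $X^{p+1}$, the concatenation $\mathcal{L}(c_0)\cdots\mathcal{L}(c_{\sigma})$ computed at line~\ref{alg:compute-by-offset-with-lcp-concat} is exactly $LCP_{p+1}$. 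The main obstacle is formulating the strengthened invariant cleanly and verifying that the per-symbol update in the algorithm indeed preserves it in one pass; once this is done, the rest is bookkeeping.
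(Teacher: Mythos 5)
Your proposal is correct and follows essentially the same route as the paper's proof: both rest on the invariant $\alpha[c]=\alpha_{p}(i,c)$ at the moment of the append (which the paper dispatches as ``direct inspection'' and you make explicit by induction), on Lemma~\ref{lemma:computing-alpha} for the running-minimum closed form, and on Proposition~\ref{proposition:lcp-alpha} to identify $\alpha_{p}(i,c)+1$ with the corresponding entry of $LCP_{p+1}$. One cosmetic slip: the leading $0$ of each $\mathcal{L}(c_i)$ is not appended by the loop (the guard $\alpha[c]\ge 0$ fails at the first $c$-preceded suffix), but is placed there by the initialization at line~\ref{alg:compute-by-offset-with-lcp:init2}; the value you derive for that entry is nonetheless the right one.
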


\begin{proof}
	Observe that $\alpha[c]\ge 0$ at line~\ref{alg:compute-by-offset-with-lcp:compute-lcp} iff
	the current suffix at position $q$ is not the first to be preceded by the character $c$, hence we must append the value $1 + \alpha_{p-1}(q,c)$ to $\mathcal{L}(c)$.
	Since $\alpha[c] = \alpha_{p-1}(q,c)$, the theorem is proved.
\end{proof}

\begin{algorithm2e}[tb!]
	\SetKwInOut{Input}{Input}\SetKwInOut{Output}{Output}

	$\mathcal{I}(c_0) \gets 0,0, \ldots, 0$\;
	\tcp{vector of $m$ $0$s}
	$\mathcal{I}(c_1), \ldots, \mathcal{I}(c_{\sigma}) \gets $ empty lists\;
	\tcp{array of $m-1$ $0$s preceded by $-1$}
	$\mathcal{L}(c_0) \gets -1,0, \ldots, 0$\;
	$\mathcal{L}(c_1), \ldots, \mathcal{L}(c_{\sigma}) \gets $ empty lists\;

	$pos \gets$ integer array with $k+1$ $0$s\;%

	\ForEach{$c\in \{c_1, \ldots, c_{\sigma}\}$}{%
		Append $0$ to $\mathcal{L}(c)$\label{alg:compute-by-offset-with-lcp:init2}\;
		$\alpha[c] \gets -1$\;
	}

	\For{$q\gets 1$ to $|I_{X^{p-1}}|$}{%
		$l \gets I_{X^{p-1}}[q]$\;
		$c \gets$ $B_l[pos[l]]$\;%
		$pos[l] \gets pos[l] + 1$\;%
		\If{$c \ne \$$}{
			Append $(l+1)$ to $\mathcal{I}(c)$\;\label{alg:compute-by-offset-with-lcp:append}
		}

		$lcp \gets \mathit{LCP}_{p-1}[i]$\;

		\ForEach{$d\in \{c_1 \ldots, c_{\sigma}\}$}{\label{alg:compute-by-offset-with-lcp-update-alpha}
			$\alpha[d] \gets \min\{\alpha[d], lcp\}$\;\label{alg:compute-by-offset-with-lcp-update-alpha-end}
		}

		\If{$c \ne \$$ and $\alpha[c] \ge 0$\label{alg:compute-by-offset-with-lcp:compute-lcp}}{%
			Append $\alpha[c] + 1$ to $\mathcal{L}(c)$\;
			$\alpha[c] = \infty$\;
		}

	}
	$I_{X^{p}} \gets \mathcal{I}(c_0) \mathcal{I}(c_1) \cdots \mathcal{I}(c_{\sigma})$\;
	$\mathit{LCP}_{p} \gets \mathcal{L}(c_0) \mathcal{L}(c_1) \cdots \mathcal{L}(c_{\sigma})$\;\label{alg:compute-by-offset-with-lcp-concat}

	\caption{Compute $I_{X^{p}}$ and $\mathit{LCP}_{p}$ from $I_{X^{p-1}}$ and $\mathit{LCP}_{p-1}$}
	\label{alg:compute-by-offset-with-lcp}
\end{algorithm2e}

In Figure~\ref{fig:example-compute-interleave} the computation of $I_{X^1}$ from $I_{X^0}$ (by Algorithm~\ref{alg:compute-by-offset} for $p=1$) is shown for the set $S$ of reads presented in Figure~\ref{fig:example-vectors-B}. The encodings $I_{X^1}$ and $I_{X^0}$ are reported in Figure~\ref{fig:example-merge-suffixes} together with $LCP_1$ and $LCP_0$ whose computation (by Algorithm~\ref{alg:compute-by-offset-with-lcp}) has been omitted for simplicity.

The procedure \textbf{BWT+LCP} (see Algorithm~\ref{alg:merge-suffixes}) computes $I_{X^k}$ and $\mathit{LCP}_k$, which are the encoding of the BWT and the LCP array of the input set $S$ of strings, by iterating Algorithm~\ref{alg:compute-by-offset-with-lcp}. Iterations stop when the maximum value $\max_{q}\{\mathit{LCP}_{p}[q]\}$ in the array $\mathit{LCP}_{p}$ is less than $p$. In fact,  it means that for an iteration  $t > p$, the values $I_{X^{t}}$ and $\mathit{LCP}_{t}$ do not change since the suffixes have been fully sorted and thus $I_{X^t}$ and $\mathit{LCP}_t$ remain equal to $I_{X^k}$ and $\mathit{LCP}_k$, respectively.
The correctness of the procedure \textbf{BWT+LCP}
is a consequence of Theorem~\ref{theorem:correctness2} and Definition~\ref{definition:p-LCP-array}. Observe that if the maximum value in the LCP array is equal to $z$, then at each iteration $p$ of Algorithm~\ref{alg:merge-suffixes} with $p \leq z$, the maximum value in $\mathit{LCP}_{p}$ is $p$, in virtue of Theorem~\ref{theorem:correctness2} and Definition~\ref{definition:p-LCP-array}. When $p = z+1$, then by Definition~\ref{definition:p-LCP-array}, the iteration $p$ gives value $z$, that is $\max_{q}\{\mathit{LCP}_{p}[q]\} < p$.  Then the suffixes have been fully sorted and the LCP array has been computed at the previous step $p = z$.

\begin{algorithm2e}[htb!]
	\SetKwInOut{Input}{Input}\SetKwInOut{Output}{Output}
	\Input{The strings $s_{1}, \ldots , s_{m}$, and their maximum length $k$}
	\Output{The BWT $B$ and the LCP array of the input strings}

	Compute $T_{0}, \ldots , T_{k}$ from $s_{1}, \ldots , s_{m}$\;\label{algo:ms-prepare-t}

	Apply Algorithm~\ref{alg:sort-suffixes} to compute $B_{0}, \ldots B_{k}$\;\label{alg:ms-compute-partial}
	\For{$1\le i\le \sum_{j=0}^{k}|B_{j}|$}{%
		$h\gets$ the smallest integer such that $\sum_{j=0}^{h}|B_{j}| \ge i$\;
		$I_{X^0}[i] \gets h$\;
	}
	\tcc{Informally, $I_{X^0}$ is the array made of $|B_0|$ $0$s, $|B_1|$ $1$s, $\ldots$, $|B_k|$ $k$s}
	$\mathit{LCP}_0 \gets -1, 0, 0, \ldots, 0$\;

	$p \gets 1$\;

	\While{\(\mathtt{TRUE}\)}{\label{alg:merge-suffixes:iteration-condition}%
		Apply Algorithm~\ref{alg:compute-by-offset-with-lcp} to compute $I_{X^{p}}$
		and $\mathit{LCP}_{p}$ from $I_{X^{p-1}}$, $\mathit{LCP}_{p-1}$\; %
		\If{$\max_{q}\{\mathit{LCP}_{p}[q]\} \not = p$}{\textbf{break}\;}
		$p \gets p+1$\;\label{alg:ms-merge-suffixes-end}
	}

	Reconstruct $B$ from $I_{X^{p-1}}$ and $B_0, \ldots, B_k$\;
	\Output{($B$, $\mathit{LCP}_{p-1}$)}
	\caption{BWT+LCP}\label{alg:merge-suffixes}
\end{algorithm2e}

\begin{figure}[htbp]
	\centering\renewcommand{\baselinestretch}{1.15}\normalsize
	\begin{tabular}{|ccl|ccl|}
		\hline
		$I_{X^0}$&$\mathit{LCP}_0$&$X^0$&$I_{X^1}$&$\mathit{LCP}_1$&$X^1$\\\hline
		0&-1&\$&0&-1&\$\\
		0&0&\$&0&0&\$\\
		0&0&\$&0&0&\$\\
		1&0&A\$&1&0&A\$\\
		1&0&T\$&3&1&ACA\$\\
		1&0&T\$&2&0&CA\$\\
		2&0&CA\$&2&1&CT\$\\
		2&0&CT\$&3&1&CGT\$\\
		2&0&GT\$&2&0&GT\$\\
		3&0&ACA\$&1&0&T\$\\
		3&0&CGT\$&1&1&T\$\\
		4&0&TCGT\$&4&1&TCGT\$\\
		\hline
	\end{tabular}
	\caption{Encodings $I_{X^{0}}$ and $I_{X^{1}}$, and arrays $\mathit{LCP}_{0}$ and $\mathit{LCP}_{1}$ for the set of reads presented in Figure~\ref{fig:example-vectors-B}, together with their related interleaves $X^0$ and $X^1$. This figure depicts the encoding $I_{X^1}$ and the array $LCP_1$ (on the right) which are computed from $I_{X^0}$ and $LCP_0$ (on the left) by the iteration $p=1$ of Algorithm~\ref{alg:compute-by-offset-with-lcp}.}
	\label{fig:example-merge-suffixes}
\end{figure}

Observe that, in virtue of the radix sort strategy, the two steps of our method (computing the partial BWTs and computing the interleave $I_X$) do not depend on the particular order of the strings in the input set $S$. For this reason, there is no particular order of the input strings which may improve the computation.

\subsection{Comparison with other strategies}
While a common element of our method with \egap and \beetl is the use of a radix sort strategy, a main difference is represented by the collection of objects to which it is applied.
\beetl's algorithm works by a unique step and is based on the following invariant: at the iteration $p$, it computes the partial BWT for the collection of suffixes of length at most $p$. Differently, our algorithm works by two steps: first it computes the partial BWTs $B_l$ (as
previously defined) and then the interleave $I_X$. In the second step the following invariant is maintained: at the iteration $p$, it computes the list of the symbols preceding all the suffixes in the input collection $S$ sorted by the $p$-long prefixes.
As a consequence, at the iteration $p$, it computes a permutation of the BWT for $S$ tending to the solution over the iterations, while \beetl computes a subsequence of the BWT for $S$ and  maintains over the iterations the reciprocal order between the symbols.

Arrays $N_l$ used by the first step of our algorithm (computing the partial BWTs $B_l$) are the same as arrays $N_l$ used in \cite{bauer_lightweight_2013}. Indeed,
$N_l[i]$ in our case is the position in $S$ of the string which is the origin of the $i$-th $l$-suffix $X_l[i]$ whose preceding symbol is $B_l[i]$.
Arrays $N_l(h)$ in \cite{bauer_lightweight_2013} are defined such that $N_l(h)[i]$ is the position in $S$, of the string which is the origin of the $i$-th $l$-suffix (in the partial BWT) starting with the $h$-th symbol $c_h$ of the alphabet. We note that the concatenation $N_l(0)  N_l(1) ... N_l(\sigma)$ gives the array $N_l$ of our algorithm.

Observe that both our algorithm and \egap use the notion of an interleave in order to compute the BWT and the LCP array. More precisely,  \egap splits the input collection $S$ into subcollections sufficiently small, then it computes the BWT (partial BWTs) for each subcollection and finally it merges the BWTs similarly to the approach in~\cite{Holt2014}.
On the other hand, our algorithm first computes the partial BWTs $B_l$ from the whole collection $S$, that are then merged maintaining the invariant property described above.

\section{Complexity}
\label{sec:complexity}

In this section we will analyze the computational and I/O volume of our algorithm. %

First we will analyze Algorithm~\ref{alg:sort-suffixes}.
This procedure mainly consists of two
nested loops in which each operation requires constant time.
If the input is a set of \(m\) strings of length \(k\), the time complexity of it is \(\mathcal{O}(mk)\).
Note that each of the $k+1$ lists $B_l$ and $N_l$ have $m$ elements which are read or written sequentially and,
moreover, each list is read only once per execution.
Hence, the I/O volume of
Algorithm~\ref{alg:sort-suffixes} is \(\mathcal{O}(mk\lg m)\) since, for each element in
\(T_0, \ldots, T_k\), Algorithm~\ref{alg:sort-suffixes} appends an integer
less than \(m\) to the correct list \(\mathcal{P(\cdot)}\) that we can store
on disk, since we access them sequentially.

Besides some $\mathcal{O}(1)$-space data structures, the algorithm uses $\sigma + 1$ lists
$\mathcal{P}(\cdot)$ to store pointers to the open files and $k+1$ arrays $T_0, T_1, \ldots, T_k$ to store
the characters of the sequences.
Note that, at each iteration of the loop at
line~\ref{alg:sort-suffixes:compute-partial}, only one array $T_{l}$ must be kept in main memory, since we need to perform non-sequential
accesses, and requires \(m\lg\sigma\) bits---notice that for one million DNA reads, that
translates to 256 Mbytes of memory, which is well below the RAM amount found in standard PCs.
Therefore, if we can address each file using \(w\) bits, the main memory requirement of Algorithm~\ref{alg:sort-suffixes} is \(\mathcal{O}(\sigma w + k w + m\lg \sigma)\) bits.

Furthermore, arrays $T_l$ ($0 \leq l \leq k$) can be computed in $\mathcal{O}(km)$ time and $\mathcal{O}(km\lg \sigma)$ I/O volume by reading sequentially $\mathcal{B}$ input strings at a time and producing $\mathcal{B}$
positions of arrays $T_{l}$, where $\mathcal{B}$ is the disk block size, that is the
number of characters that are read or written in a single disk operation (see
\cite{Vitter:2001:EMA:384192.384193}).
Notice that this step requires to keep $l\times\mathcal{B}$ characters in main memory: this is
not a problem for Bioinformatics applications, since short reads are at most a
few hundreds of characters long, and even longer reads are at most 20000
characters long. Anyway, it is possible to adapt the algorithm of~\cite{DBLP:journals/cacm/AggarwalV88,vitter-external-memory} to compute the arrays $T_{l}$ arrays with $\mathcal{O}(\mathcal{B}^{2})$ main memory.

We will now analyze Algorithm~\ref{alg:compute-by-offset-with-lcp}.
The time complexity of this procedure is
\(\mathcal{O}(mk\sigma)\) since such procedure is composed of a \texttt{for}
loop that iterates over the encoding  \(I_{X^{p-1}}\)---whose length
is \(mk\)---performing constant time operations per element except
for the loop at
lines~\ref{alg:compute-by-offset-with-lcp-update-alpha}--\ref{alg:compute-by-offset-with-lcp-update-alpha-end}
that requires \(\mathcal{O}(\sigma)\) time.

The I/O volume is $\mathcal{O}(mk\max\{\lg m, \lg l\})$ bits, since each iteration of the loop at
lines~\ref{alg:compute-by-offset-with-lcp-update-alpha}--\ref{alg:compute-by-offset-with-lcp-update-alpha-end}
requires to read and write a constant number of elements of some lists whose values are
bounded by $m$ or $l$, and since \(\alpha\) %
is kept in main memory.
The main memory usage is $\mathcal{O}(\sigma\lg l + kw)$ bits, since we store \(\sigma\) integers smaller than \(l\) in \(\alpha\)
and \(k\) pointers to the lists $B_{i}$.

We can now analyze Algorithm~\ref{alg:merge-suffixes}, which is composed of
two main steps:
in the first one it prepares the input data structures
(line~\ref{algo:ms-prepare-t}), invokes Algorithm~\ref{alg:sort-suffixes}, and initializes
some data structures.
In the second part
(lines~\ref{alg:merge-suffixes:iteration-condition}--\ref{alg:ms-merge-suffixes-end})
it computes the final encoding \(I_{X^{p}}\) and the LCP array from the
structures computed at the previous step by iteratively applying
Algorithm~\ref{alg:compute-by-offset-with-lcp}.

The complexity of the first part is essentially that of
Algorithm~\ref{alg:sort-suffixes}, since computing the lists \(T_0, \ldots , T_k\)
(line~\ref{algo:ms-prepare-t}) requires \(\mathcal{O}(mk)\) with
a single scan of the input data (whose size is \(mk\)), while
outputting the lists requires constant time
per element.

The second step is mainly composed of a  \texttt{while} loop that iteratively
applies Algorithm~\ref{alg:compute-by-offset-with-lcp} (that requires \(\mathcal{O}(mk\sigma)\)) to compute the final
interleave and the final LCP array.
Moreover, the proof of correctness of Algorithm~\ref{alg:merge-suffixes} also shows that
Algorithm~\ref{alg:compute-by-offset-with-lcp} is applied $l+1$ times,  where
\(l\) is the largest value in the LCP array.

Finally, Algorithm~\ref{alg:merge-suffixes} builds the final BWT from \(I_{X^P}\)
and the lists \(B_0, \ldots, B_k\) by a single scan of those $m$-long lists, which requires
\(\mathcal{O}(mk)\) time overall.
Therefore, Algorithm~\ref{alg:merge-suffixes} requires an overall
\(\mathcal{O}(mkl\sigma)\) time.

The I/O complexity of the first step is \(\mathcal{O}(\max\{ mk\lg m, mk\lg \sigma\})\) bits whereas the main
memory requirement is \(\mathcal{O}(\sigma w + k w + m\lg \sigma)\) bits.
Indeed, computing the lists \(T_0, \ldots, T_k\) at line~\ref{algo:ms-prepare-t}
requires us to store only one character per time of each sequence \(s_i\) and to
append it to the correct list: therefore it has \(\mathcal{O}(mk\lg \sigma)\) bits I/O
volume and \(\mathcal{O}(kw + \lg \sigma)\) bits main memory requirement.
We have to include the requirements of Algorithm~\ref{alg:sort-suffixes}, which changes
the main memory needed for the first step to \(\mathcal{O}(\sigma w + k w + m\lg \sigma)\) bits.

The I/O volume of the second step is \(\mathcal{O}(mkl\lg l)\) bits since it consists
essentially of $l$ applications of Algorithm~\ref{alg:compute-by-offset-with-lcp}.
Finally, while building the final BWT from \(I_{X^P}\),
Algorithm~\ref{alg:merge-suffixes} reads \(\mathcal{O}(mk\lg m)\) bits due to
the interleave and \(\mathcal{O}(mk\lg \sigma)\) bits due to the partial BWTs,
writes \(\mathcal{O}(mk\lg\sigma)\) bits for the final BWT and requires
\(\mathcal{O}(\max\{\lg l, \lg \sigma\})\) bits of main memory since at most it stores in main memory one
element of \(I_{X^P}\) and one element of a partial BWT.

Therefore, overall Algorithm~\ref{alg:merge-suffixes} reads and writes
$\mathcal{O}(mkl\max\{\lg m, \lg l\})$ bits from and to the disk and requires
\(\mathcal{O}(\sigma w + k w + m\lg \sigma + \lg l)\) bits of main memory.
We can summarize our results as follows.

\begin{proposition}
	\label{proposition:complexity}
	Given as input a set composed of \(m\) strings of length \(k\) over and alphabet
	of size \(\sigma\), the procedure \textbf{BWT+LCP} computes the BWT and the LCP
	array of it in \(\mathcal{O}(mkl\sigma)\) time, where \(l\) is the maximal value of
	the LCP array.
	This procedure requires to store in main memory \(\mathcal{O}(\sigma w + k w + m \lg \sigma + \lg l)\) bits
	and reads and writes from and to the disk \(\mathcal{O}(mkl\max\{\lg m, \lg l\})\) bits.
\end{proposition}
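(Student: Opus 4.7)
The plan is to decompose procedure \textbf{BWT+LCP} into its two phases --- preprocessing plus Algorithm~\ref{alg:sort-suffixes} (lines~\ref{algo:ms-prepare-t}--\ref{alg:ms-compute-partial}), then iteration of Algorithm~\ref{alg:compute-by-offset-with-lcp} (lines~\ref{alg:merge-suffixes:iteration-condition}--\ref{alg:ms-merge-suffixes-end}) followed by a final $O(mk)$-time reconstruction of $B$ --- and to add the contributions. First I would handle Phase~1: building $T_0,\ldots,T_k$ from the input strings is a single scan costing $O(mk)$ time and $O(mk\lg\sigma)$ I/O bits. Algorithm~\ref{alg:sort-suffixes} then runs $k$ outer iterations, each of which does an $O(m)$-time distribution of the $c_h$-projections and an $O(m)$-time sweep to read $T_l$ and write $B_l$, giving $O(mk)$ time, $O(mk\lg m)$ I/O bits, and $O(\sigma w + kw + m\lg\sigma)$ bits of RAM (file handles for $\mathcal{P}(\cdot)$, pointers to the $T_l$ files, and the one $T_l$ that must be resident for random access).

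Next I would analyse one invocation of Algorithm~\ref{alg:compute-by-offset-with-lcp}. Each invocation performs a single linear scan of the $(k+1)m$ entries of $I_{X^p}$; the only non-constant work per position is the inner loop over the alphabet that refreshes $\alpha$ (lines~\ref{alg:compute-by-offset-with-lcp-update-alpha}--\ref{alg:compute-by-offset-with-lcp-update-alpha-end}), contributing an $O(\sigma)$ factor. Hence one invocation takes $O(mk\sigma)$ time and $O(mk\max\{\lg m,\lg l\})$ I/O bits (interleave tags contribute $\lg m$, $LCP$ values contribute $\lg l$), while holding $\alpha$ ($O(\sigma\lg l)$ bits) and pointers to the $B_l$ lists ($O(kw)$ bits) in RAM. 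To bound the iteration count of the outer \texttt{while}, I would invoke the correctness argument: by Theorem~\ref{theorem:correctness2} and Definition~\ref{definition:p-LCP-array}, as soon as $p$ exceeds the largest LCP value $l$ both $I_{X^p}$ and $LCP_p$ have stabilised, so the guard at line~\ref{alg:merge-suffixes:iteration-condition} must fail no later than $p=l+1$. Therefore Algorithm~\ref{alg:compute-by-offset-with-lcp} is applied $O(l)$ times.

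Finally I would assemble the totals. Multiplying the per-invocation bounds by $l$ absorbs Phase~1 and the $O(mk)$-time single-scan reconstruction of $B$ from $I_{X^p}$ and $B_0,\ldots,B_k$, yielding the claimed $O(mkl\sigma)$ time and $O(mkl\max\{\lg m,\lg l\})$ I/O bits. The RAM requirement is the maximum across phases, namely $O(\sigma w + kw + m\lg\sigma + \lg l)$ bits. The main obstacle I expect is the bit-accurate I/O bookkeeping: the various sequential files carry heterogeneous payloads --- characters of $B_l$ occupy $O(\lg\sigma)$ bits, indices in $N_l$ occupy $O(\lg m)$ bits, interleave tags in $I_{X^p}$ occupy $O(\lg k)$ bits, and LCP values occupy $O(\lg l)$ bits --- and one must verify that their combination collapses to the clean $\max\{\lg m,\lg l\}$ factor per scanned element once the Phase~1 contribution is absorbed into $l$ copies of Phase~2. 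With this check the proposition follows.
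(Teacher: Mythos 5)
Your proposal is correct and follows essentially the same route as the paper: the same decomposition into the preprocessing/partial-BWT phase, the per-invocation analysis of Algorithm~\ref{alg:compute-by-offset-with-lcp} (with the $O(\sigma)$ factor from the $\alpha$-refresh loop), the $l+1$ bound on the number of iterations derived from Theorem~\ref{theorem:correctness2} and Definition~\ref{definition:p-LCP-array}, and the final summation with the phase-wise maximum for RAM. Your closing remark about the heterogeneous payloads (in particular the $O(\lg k)$ bits per interleave tag not being obviously dominated by $\max\{\lg m,\lg l\}$) is a fair observation about a detail the paper itself also glosses over, but it does not change the conclusion.
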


Note that, if \(\sigma\) is constant then the time complexity of the method presented in
this paper becomes \(\mathcal{O}(mkl)\).
Moreover, if the word size is \(\max\{w, \lg m, \lg l \}\) then its I/O
volume and main memory requirement become \(\mathcal{O}(mkl)\) and
\(\mathcal{O}(k + m)\) respectively.

\section{Results}
\label{sec:exp-eval}

We implemented the method proposed in this article in a
prototype in C, named \texttt{bwt-lcp-em} (we will refer to it as \bwtlcpem in the following) that is freely available at
\url{https://github.com/AlgoLab/bwt-lcp-em}.  We compared our method
with other tools specifically designed to index datasets composed by a
huge number of short sequences such as Next-Generation Sequencing
read sets.

We have compared \bwtlcpem with the original
implementation of \extlcp~\cite{Cox2016} (\beetl), as well as a more recent version (\beetlem) that implements a fully external memory approach\footnote{The second version is available at \url{https://github.com/giovannarosone/BCR_LCP_GSA}.}, the in-memory
method \gsais~\cite{DBLP:journals/tcs/LouzaGT17}, and two recent external memory
tools (\egsa~\cite{Louza2017d} and \egap~\cite{DBLP:journals/almob/EgidiLMT19}).
Notice that \gsais and \egsa have been designed to compute the suffix array of
a set of strings, so the computation of the BWT and of the LCP array is likely
not optimized.

We used some non-default values for some of the parameters in order to
minimize main memory usage: \egap has been run with \verb+--lbytes 1+, \beetl
with \verb+--memory-limit=900+, and \egsa by compiling with \verb+MEMLIMIT=900+.
We allowed \egap to use 95\% of the available RAM, as suggested in its website.

We compared the considered tools in the scenario of 1GB of main memory
available, by considering instances with 1, 2, 4, 8, 16, and 32 million
sequences, taken from two different data sources: (a) 148bp Illumina reads from the Genome In A
Bottle (GIAB)~\cite{giab2014} consortium, more precisely from the NA24385
individual; (b) random sequences of length $151$ generated by a \texttt{Python} script that builds uniformly distributed fixed-length sequences over the DNA alphabet (for the extended and detailed experimentation results we refer the read to the GitHub repository of the implementation).
The goal of using these two datasets is to experimentally assess the theoretical
time complexity of our approach that shows a dependency on $l$, i.e.,
the maximum value of the LCP array (see Section~\ref{sec:complexity}).
We expect that $l$ in the random datasets will be considerably less than the
length $k$ of input strings.
On the other hand, the Illumina datasets represent the worst case of our
approach as they have a 300$\times$ mean coverage.
Hence they will surely contain duplicate reads and $l$ will be equal to the
length $k$ of the input strings.

We ran all the experiments on the same workstation running Ubuntu Linux
18.04 equipped with an Intel Core i7-4770 CPU running at 3.40GHz and a 256GB solid state disk. The machine is equipped with 8GB of RAM, and we limited the amount of RAM at boot time to 1GB to avoid the effects of OS caching.

Table~\ref{table:experimental-evaluation-times-1g} reports the time (in
minutes) required to compute the BWT and the LCP array. The first column
indicates the number of sequences in the dataset, whereas column $l$ indicates
the maximum value of the LCP array on that dataset.
Symbol \(\star\) means that
the tool could not complete the execution in our environment because it needed
more than 1GB of RAM, while symbol \(\diamond\) means that the tool could not
complete the execution since it required more disk space than that available.
Notice that \beetl and \gsais did not complete some executions because of the
RAM limit, while \egsa required more disk space than that available.
To better highlight the trends,
Figure~\ref{fig:experimental-evaluation-times-1g} visually depicts the same results
presented in the table.
We expect that this experiment will show the advantages of memory-conscious
approaches.

\begin{table}[t]
  \centering\renewcommand{\baselinestretch}{1.15}\small
  \begin{tabular}{r >{\raggedleft\arraybackslash}p{2em}rrrrrr}
    \toprule
    \multirow{2}{4em}{\centering No.~of strings}
    & \multicolumn{7}{c}{Dataset: NA24385} \\
    \cmidrule{2-8}
     & $l$ & \bwtlcpem & \beetlem & \beetl & \gsais & \egsa & \egap \\
    \midrule
    1M & 148 & 26   & 12  & 11       & 28       & 60        & 1\\
    2M & 148 & 53   & 25  & 23       & \(\star\)& 239       & 18    \\
    4M & 148 & 105  & 51  & 52       & \(\star\)& 693       & 44    \\
    8M & 148 & 213  & 122 & 124      & $\star$  & 2370      & 184 \\
   16M & 148 & 414  & 241 & 227      & $\star$  & \(\diamond\) & 448  \\
   32M & 148 & 855  & 633 & \(\star\)& $\star$  & \(\diamond\) & 915    \\
  \end{tabular}

  \begin{tabular}{r >{\raggedleft\arraybackslash}p{2em}rrrrrr}
    \toprule
    \multirow{2}{4em}{\centering No.~of strings}
    & \multicolumn{7}{c}{Dataset: random} \\
    \cmidrule{2-8}
     & $l$ & \bwtlcpem & \beetlem & \beetl & \gsais & \egsa & \egap\\
    \midrule
    1M & 25  & 5    & 13  & 12      & 29       & 10        & 1  \\
    2M & 28  & 11   & 27  & 25      & \(\star\)& 25        & 11 \\
    4M & 27  & 21   & 50  & 54      & \(\star\)& 49        & 24 \\
    8M & 28  & 43   & 104 & 119     & $\star$  & 138       & 92\\
   16M & 30  & 108  & 249 & 263     & $\star$  & \(\diamond\) & 227\\
   32M & 32  & 238  & 813 &\(\star\)&$\star$   & \(\diamond\) & 434\\
    \bottomrule
  \end{tabular} \caption{Time required to compute the BWT and the LCP array (in minutes) on the NA24385 and random datasets using a PC with 1GB RAM.
    The first column indicates the number of sequences in the dataset.
    Column $l$ indicates the maximum value of the LCP array on that dataset.
    \(\star\) means that the tool required more than 1GB of RAM.
    \(\diamond\) means that the tool crashed because of disk space exhaustion.
  }
  \label{table:experimental-evaluation-times-1g}
\end{table}

\begin{figure}[t]
  \centering
  \includegraphics[width=0.99\linewidth]{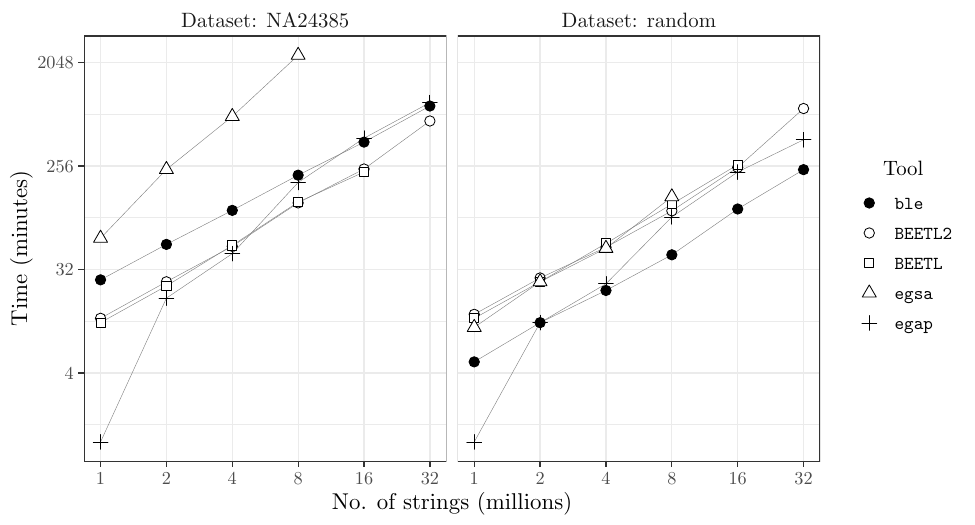}
  \caption{Time required to compute the BWT and the LCP array (in minutes) on the NA24385 and random datasets using a PC with 1GB RAM.}
  \label{fig:experimental-evaluation-times-1g}
\end{figure}

The results point out that only \bwtlcpem, \beetl, \beetlem, and \egap were able to deal with
such a limited amount of main memory. Moreover, \bwtlcpem, \beetlem, and \egap were able to
compute the BWT and LCP array for all the datasets, while \gsais and \egsa could
only cope with smaller datasets.
Since \gsais is an in-memory approach, its memory requirements made impossible to process even moderately large instances.

On the NA24385 dataset, \egap is the fastest tool on the instances up to 8M
reads, while \beetlem is the fastest on larger instances.
Still, the trend in the running time hints that \bwtlcpem will likely become the
fastest tool on instances larger than those considered here.

On the random dataset, \bwtlcpem is the fastest tool on all instances with at
least 2 million reads---\egap is the fastest on 1 million reads.
\egap and \bwtlcpem are always the two fastest tools.

The comparison of the running times on the two datasets empirically confirms
that \bwtlcpem has a time complexity that depends linearly on the maximum value
of the LCP array, while \beetl and \beetlem depend only on the maximum length of
the input strings.
As expected, also \egsa and \egap show a dependency on the maximum value of the LCP
array, as both are definitely faster on the random dataset than on the NA24385
dataset (for \egap roughly \(2\times\), for \egsa from \(6\times\) to \(15\times\)).

\begin{table}[t]
  \centering\renewcommand{\baselinestretch}{1.15}\small
  \begin{tabular}{r >{\raggedleft\arraybackslash}p{2em}rrrrrr}
    \toprule
    \multirow{2}{4em}{\centering No.~of strings}
    & \multicolumn{7}{c}{Dataset: NA24385} \\
    \cmidrule{2-8}
     & $l$ & \bwtlcpem & \beetlem & \beetl & \gsais & \egsa & \egap \\
    \midrule
    1M & 148 & 6    & 35  & 255       & 747       & 764       & 728\\
    2M & 148 & 10   & 67  & 453       & \(\star\) & 770       & 760\\
    4M & 148 & 18   & 131 & 722       & \(\star\) & 775       & 774\\
    8M & 148 & 34   & 255 & 710       & $\star$   & 773       & 772\\
   16M & 148 & 65   & 483 & 732       & $\star$   & \(\diamond\) & 768\\
   32M & 148 & 127  & 781 & \(\star\) & $\star$   & \(\diamond\) & 763\\
  \end{tabular}

  \begin{tabular}{r >{\raggedleft\arraybackslash}p{2em}rrrrrr}
    \toprule
    \multirow{2}{4em}{\centering No.~of strings}
    & \multicolumn{7}{c}{Dataset: random} \\
    \cmidrule{2-8}
     & $l$ & \bwtlcpem & \beetlem & \beetl & \gsais & \egsa & \egap\\
    \midrule
    1M & 25  & 6   & 35  & 214   & 761      & 761       & 724 \\
    2M & 28  & 10  & 67  & 392   & \(\star\)& 767       & 766 \\
    4M & 27  & 18  & 131 & 734   & \(\star\)& 775       & 781 \\
    8M & 28  & 33  & 255 & 726   & $\star$  & 771       & 772 \\
   16M & 30  & 65  & 483 & 729   & $\star$  & \(\diamond\) & 762 \\
   32M & 32  & 128 & 777 &\(\star\)&$\star$ & \(\diamond\) & 760 \\
    \bottomrule
  \end{tabular} \caption{Peak RAM usage on the NA24385 and random datasets using a PC with 1GB RAM.
    The first column indicates the number of sequences in the dataset.
    Column $l$ indicates the maximum value of the LCP array on that dataset.
    \(\star\) means that the tool required more than 1GB of RAM.
    \(\diamond\) means that the tool crashed because of disk space exhaustion.
  }
  \label{table:experimental-evaluation-memory-1g}
\end{table}

Table~\ref{table:experimental-evaluation-memory-1g}, reports the RAM usage (in
Megabytes) required to compute the BWT and the LCP array.
Just as for Table~\ref{table:experimental-evaluation-times-1g}, the first column
indicates the number of sequences in the datasets, whereas column $l$ indicates
the maximum value of the LCP array on that dataset. As before, symbols \(\star\)
and \(\diamond\) mean that the tool could not complete the execution in our
environment since it exhausted the RAM or the available disk space,
respectively.
Notice that \bwtlcpem is always the tool requiring the smallest amount of memory,
by a factor at least 6.

\section{Conclusions}
\label{sec:conclusions}

We have presented a new lightweight algorithm to compute the BWT and the LCP array of a
set of $m$ strings, each  $k$ characters long, based on applying a backward strategy for merging partial BWTs.
More precisely, our algorithm has an \(\mathcal{O}(mkl)\) time and I/O volume, and uses
\(\mathcal{O}(k + m)\) main memory to compute the BWT and LCP array, where $l$ is the maximum
value in the LCP array.
Our time complexity and I/O volume are in the worst case as those of the best previously available algorithms.
The experimental analysis shows that our approach is competitive with the best
available external-memory methods and that its advantage is noticeable
on large inputs when the available RAM is limited.

The approach presented here may be further investigated in other research directions, as for example in the case of arbitrary alphabets and for  collection of strings in other contexts, such as in dealing with dictionaries where the parameter  $l$ may be smaller than the size of the input strings.
Theoretically, it is of interest to investigate the open question whether the
optimal time \(\mathcal{O}(mk)\) time can be achieved for computing the BWT.
Some recent results (see, for example, \cite{kempa2020resolution}) may also
suggest that running times of BWT-related algorithms could be improved on
compressible inputs.
Investigating if these results have an implication to our algorithm is an
interesting research direction.

On the other end, the prototype called \bwtlcpem implementing our approach is still a proof of concept, although carefully developed, and
future releases of the tool could improve its performance by, for example, a better buffering strategy of the
input and output files, asynchronous I/O, or better representation (i.e., fast compression) of the
intermediate files.

\section*{Acknowledgements}
We would like to thank Giovanni Manzini for the discussion on the implementation and on the comparison with the software \egap.
This project has received funding from the European Union’s Horizon 2020 research and innovation programme under the Marie Skłodowska-Curie grant agreement No 872539.

\bibliography{abbreviations,bwt,haplotype}

\end{document}